\newcommand{\mat}{\mathbf}
\newcommand{\wt}[1]{\widetilde{#1}}
\newcommand{\N}{\mathds{N}}
\newcommand{\R}{\mathds{R}}
\newcommand{\C}{\mathds{C}}
\newcommand{\UF}{\mathcal{U}}
\newcommand{\F}{\mathfrak{F}}
\newcommand{\eps}{\varepsilon}
\newcommand{\ND}{\mathcal{N}}
\newcommand{\set}[1]{\left\{#1\right\}}
\newcommand{\abs}[1]{\left|#1\right|}
\newcommand{\norm}[1]{\left\|#1\right\|}
\newcommand{\To}{\mathop{\rightarrow}}
\newcommandx{\E}[2][2=\empty]{\ifthenelse{\equal{#2}{\empty}}{\mathrm{E}}{\mathrm{E}_{#2}}\!\left(#1\right)}
\newcommand{\Var}[1]{\mathrm{Var}\left(#1\right)}
\newcommand{\Prob}[1]{\mathrm{Pr}\!\left(#1\right)}
\newcommand{\psfragsize}{\small}
\newcommandx{\psf}[3][3=c]{\psfrag{#1}[#3][]{$\psfragsize{#2}$}}
\newcommandx{\psft}[3][3=c]{\psfrag{#1}[#3][]{\psfragsize{#2}}}
\renewcommand{\vec}{\bm}
\newcommand{\supp}{\text{supp}}
\newcommand{\hrnorm}[1]{\llbracket #1\rrbracket_\infty}
\newtheorem{thm}{Theorem}[section]{\bfseries}{\rmfamily}
\newtheorem{lem}[thm]{Lemma}{\bfseries}{\rmfamily}
\newtheorem{prop}[thm]{Proposition}{\bfseries}{\rmfamily}
{\bfseries}{\rmfamily}
\newtheorem{defn}[thm]{Definition}{\bfseries}{\rmfamily}
\newtheorem{exa}[thm]{Example}{\bfseries}{\rmfamily}
\newtheorem{rem}[thm]{Remark}{\itshape}{\rmfamily}
\newtheorem{assumption}[thm]{Assumption}{\bfseries}{\rmfamily}
{\bfseries}{\rmfamily}
\title{On Game-Theoretic Risk Management (Part One)}
\author{Stefan Rass
\thanks{Universit\"{a}t Klagenfurt, Institute of Applied Informatics,
System Security Group, Universit\"{a}tsstrasse 65-67, 9020 Klagenfurt, Austria.
This work has been done in the course of consultancy for the EU Project
HyRiM (Hybrid Risk Management for Utility Networks; see https://hyrim.net),
led by the \emph{Austrian
Institute of Technology} (AIT; www.ait.ac.at). See the acknowledgement section.}\\
\texttt{stefan.rass@aau.at}}
\begin{document}

\maketitle

\begin{abstract}
\begin{center}
\textbf{Abstract}
\end{center}
 Optimal behavior in (competitive) situation is traditionally determined
with the help of utility functions that measure the payoff of different
actions. Given an ordering on the space of revenues (payoffs), the
classical axiomatic approach of von Neumann and Morgenstern establishes the
existence of suitable utility functions, and yields to game-theory as the
most prominent materialization of a theory to determine optimal behavior.
Although this appears to be a most natural approach to risk management too,
applications in critical infrastructures often violate the implicit
assumption of actions leading to deterministic consequences. In that sense,
the gameplay in a critical infrastructure risk control competition is
intrinsically random in the sense of actions having uncertain consequences.
Mathematically, this takes us to utility functions that are
probability-distribution-valued, in which case we loose the canonic (in
fact every possible) ordering on the space of payoffs, and the original
techniques of von Neumann and Morgenstern no longer apply.

This work introduces a new kind of game in which uncertainty applies to the
payoff functions rather than the player's actions (a setting that has been
widely studied in the literature, yielding to celebrated notions like the
trembling hands equilibrium or the purification theorem). In detail, we
show how to fix the non-existence of a (canonic) ordering on the space of
probability distributions by only mildly restricting the full set to a
subset that can be totally ordered. Our vehicle to define the ordering and
establish basic game-theory is non-standard analysis and hyperreal numbers.
\end{abstract}

\tableofcontents \newpage
\section{Introduction}
Security risk management is a continuous cycle of action and reaction to the
changing working conditions of an infrastructure. This cycle is detailed in
relevant standards like ISO 2700x, where phases designated to
\emph{planning}, \emph{doing}, \emph{checking} and \emph{acting} are
rigorously defined and respective measures are given.

Our concern in this report is an investigation of a hidden assumption
underneath this recommendation, namely the hypothesis that some wanted impact
can be achieved by taking the proper action. If so, then security risk
management would degenerate to a highly complex but nevertheless
deterministic control problem, to which optimal solutions and strategies
could be found (at least in theory).

Unfortunately, however, reality is intrinsically random to some extent, and
the outcome of an action is almost never fully certain. Illustrative examples
relate to how public opinion and trust are dependent on the public relation
strategies of an institution. While there are surely ways to influence the
public opinion, it will always be ultimately out of one's full and exclusive
control. Regardless of this, we ought to find optimal ways to influence the
situation in the way we like. This can -- in theory -- again be boiled down
to a (not so simple) optimization problem, however, one that works on
optimizing partially random outcomes. This is where things start to get
nontrivial.

Difficulties in the defense against threats root in the nature of relevant
attacks, since not all of them are immediately observable or induce instantly
noticeable or measurable consequences. Indeed, the best we can do is finding
an optimal protection against an a-priori identified set of attack scenarios,
so as to gain the assurance of security against the known list of threat
scenarios. Optimizing this protection is often, but not necessarily, tied to
some kind of \emph{adversary modelling}, in an attempt to sharpen our
expectations about what may happen to us. Such adversary modeling is
inevitably error prone, as the motives and incentives of an attacker may
deviate from our imagination to an arbitrary extent.

Approaching the problem mathematically, there are two major lines of decision
making: one works with an a-priori hypothesis of the current situation, and
incorporates current information into an a-posteriori model that tells us how
things will evolve, and specifically, which events are more likely than
others, given the full information that we have. Decision making in that case
means that we seek the optimal behavior so as to master a \emph{specifically}
expected setting (described by the a-posteriori distribution). This is the
\emph{Bayesian} approach to decision making (see \cite{Robert2001} for a
fully comprehensive detailed). The second way of decision making explicitly
avoids any hypothesis about the current situation, and seeks an optimal
behavior against \emph{any} possible setting. Unlike the Bayesian
perspective, we would thus intentionally and completely ignore all available
data and choose our actions to master the worst-case scenario. While this
so-called \emph{minimax decision making} is obviously a more pessimistic and
cautious approach, it appears better suited for risk management in situations
where data is either not available, not trustworthy or inaccurate.

For this reason, we will hereafter pursue the minimax-approach and dedicate
section \ref{sec:relation-to-bayesian-choice} to a discussion how this fits
into the Bayesian framework as a special case.

We assume that the risk manager can repeatedly take actions and that the
possible actions are finitely many. Furthermore, we assume that the adversary
against which we do our risk control, also has a finite number of possible
ways to cause trouble. In terms of an ISO 2700k risk management process, the
risk manager's actions would instantiate \emph{controls}, while the
adversary's actions would correspond to identified threat scenarios. The
assumption of finiteness does stringently constrain us here, as an infinite
number of actions to choose from may in any case overstrain a human
decision-maker.

The crucial point in all that follows is that any action (as taken by the
decision maker) in any situation (action taken by the adversary) may have an
intended but in any case random outcome. To properly formalize this and fit
it into a mathematical, in fact game-theoretic, framework, we hereafter
associate the risk manager with \emph{player 1} in our game, who competes
with \emph{player 2}, who is the adversary. Actions of either players are in
game-theoretic literature referred to as \emph{pure strategies}; the entirety
of which will be abbreviated as $PS_1$ and $PS_2$ for either player, so
$PS_1$ comprises all actions, hereafter called \emph{strategies} available
for risk management, while $PS_2$ comprises all trouble scenarios. For our
treatment, it is not required to be specific on how the elements in both
action sets look like, as it is sufficient for them to ``be available''.

Let $PS_1, PS_2$ denote finite sets of strategies for two players, where
player 1 is the honest defender (e.g., utility infrastructure provider), and
player 2 is the adversary. We assume player 1 to be unaware of its opponents
incentives, so that an optimal strategy is sought against any possible
behavior within the \emph{known} action space $PS_2$ of the opponent
(rational or irrational, e.g., nature),.

In this sense, $PS_2$ can be the set of all known possible security
incidents, whose particular incarnations can become reality by the
adversary's action. To guard its assets, player 1 can choose from a finite
set of actions $PS_1$ to minimize the costs of a recovery from any incident,
or equivalently, keep its risk under control.

Upon these assumptions, the situation can be described by an $(n\times
m)$-matrix of scenarios, where $n=\abs{PS_1}, m = \abs{PS_2}$, each of which
is associated with some cost $R_{ij}$ to recover the system from a
malfunctioning state back to normal operation from scenario $(i,j)\in
PS_1\times PS_2$. We use the variable $R_{ij}$ henceforth to denote the cost
of a \emph{repair} made necessary by an incident $j\in PS_2$ happening when
the system is currently in configuration $i\in PS_1$.

The process of risk management will be associated with player 1 putting the
system into different configurations over time in order to minimize the risk
$R_{ij}$.

\begin{rem} We leave the exact understanding of ``risk'' or ``damage'' intentionally
undefined here, as this will be quite different between various utility
infrastructures or general fields of application.
\end{rem}

\begin{rem} Neither the set $PS_1$ nor the set $PS_2$ is here specified in
any detail further than declaring it as an ``action space''. The reason is,
again, the expected diversity of actions and incidents among various fields
of application (or utility infrastructures). Therefore and to keep this
report as general and not limiting the applicability of the results to
follow, we will leave the precise elements of $PS_1, PS_2$ up to definitions
that are tailored to the intended application.

Examples of strategies may include:
\begin{itemize}
  \item random spot checks in the system to locate and fix problems
      (ultimately, to keep the system running),
  \item random surveillance checks and certain locations,
  \item certain efforts or decisions about whether or not, and which, risks
      or countermeasures shall be communicated to the society or user
      community,
  \item etc.
\end{itemize}
\end{rem}

In real life settings, it can be expected that an action (regardless of who
takes it), always has some intrinsic randomness. That is, the effect of a
particular scenario $(i,j)\in PS_1\times PS_2$ is actually a random variable
$R_{ij}$, having only some ``expected'' outcome that may be different between
any two occurrences of the same situation $(i,j)$ over time.

To be able to properly handle the arising random variables, let us think of
those modeling not the benefits but rather the \emph{damage} that a security
incident may cause. In this view, we can go for minimization of an expectedly
positive value that measures the cost of a recovery. Formally, we introduce
the following assumption that will greatly ease theoretical technicalities
throughout this work, while not limiting the practicability too much.

The family $\set{R_{ij}:i\in PS_1, j\in PS_2}$ of random damage distributions
in our game will be assumed with all members satisfying the following
assumption:
\begin{assumption}\label{asm:finity-of-repair-costs}
Let $R_{ij}$ be a real-valued random variable. On $R_{ij}$, we impose the
following assumptions:
\begin{itemize}
  \item $R_{ij}\geq 1$ (w.l.o.g.\footnote{It is common to assume losses to
      be $\geq 0$; our modification has technical reasons, but causes no
      semantic difference in the comparisons between two loss densities,
      since both loss variables are just shifted by the same amount. Also,
      the loss can (w.l.o.g.) be scaled until losses in the range $[0,1)$
      become practically negligible.}).
  \item $R_{ij}$ has a known distribution $F_{ij}$ with compact support
      (note that this implies that all $R_{ij}$ is upper-bounded).
  \item The probability measure induced by $F_{ij}$ is either discrete or
      continuous and has a density function $f_{ij}$. For continuous random
      variables, the density function is assumed to be continuous.
\end{itemize}
\end{assumption}

\subsection{Symbols and Notation}
This section is mostly intended to refresh the reader's memory about some
basic but necessary concepts from calculus and probability theory that we
will use in the following to develop the theoretical groundwork. This
subsection can thus be safely skipped and may be consulted whenever necessary
to clarify details.

\paragraph{General Symbols:}
Sets, random variables and probability distribution functions are denoted as
upper-case letters like $X$ or $F$. Matrices and vectors are denoted as
bold-face upper- and lower-case letters, respectively. For finite sets, we
write $\abs{X}$ for the number of elements (cardinality). For real values
$\abs{a}$ denotes the absolute value of $a\in\R$. For arbitrary sets, the
symbol $X^k$ is the $k$-fold cartesian product of $X$; the set $X^\infty$
thus represents the collection of all infinite sequences $(a_1, a_2, a_3,
\ldots)$ with elements from $X$. We denote such a sequence as
$(a_n)_{n\in\N}$.

If $X$ is a random variable, then its probability distribution $F_X$ is told
by the notation $X\sim F_X$. Whenever this is clear from the context, we omit
the subscript to $F$ and write $X\sim F$ only. If $X$ lives on a discrete
set, then we call $X$ a \emph{discrete random variable}. Otherwise, if $X$
takes values in some infinite and uncountable set, such as $\R$, then we call
$X$ a \emph{continuous random variable}. For discrete distributions, we may
also use the vector $\vec p$ of probabilities of each event to denote the
distribution of the discrete variable $X$ as $X\sim\vec p$.

Calligraphic letters denote families (sets) of sets or probability
distributions, e.g., ultrafilters (defined below) are denoted as $\UF$, or
the family of all probability distributions being denoted as $\F$. The family
of subsets of a set $A$ is written as $\mathcal{P}(A)$ (the \emph{power-set}
of $A$). If $F\in\F$ is a probability distribution, then its density --
provided it exists -- is denoted by the respective lower-case letter $f$.

\paragraph{Topology and Norms:}
As our considerations in section \ref{sec:generalized-game-theory} will
heavily rely on concepts of continuity and compactness or openness of sets,
we briefly review the necessary concepts now.

A set $A$ is called \emph{open}, if for every $x\in A$ there is another open
set $B\subset A$ that contains $x$. The family $\mathcal{T}$ of all open sets
is characterized by the property of being closed under infinite union and
finite intersection. Such a set is called a \emph{topology}, and the set $X$
together with a topology $\mathcal{T}\subset \mathcal{P}(X)$ is called a
\emph{topological space}. An interval $A$ is called \emph{closed}, if its
complement (w.r.t. the space $X$) is open.

In $\R$, it can be shown that the open intervals are all of the form $\set{x:
a<x<b}$ for $a,b\in\R$ and $a<b$. We denote these intervals by $(a,b)$ and
the topology on $\R$ is the set containing all of them. Note the existence of
a total ordering $\leq$ on a space always induces the so-called
\emph{order-topology}, whose open sets are defined exactly the aforementioned
way. Closed intervals are denoted by square brackets, $[a,b]=\set{x:a\leq
x\leq b}$. An set $X\subset\R$ is called \emph{bounded}, if there are two
constants $a,b<\infty$ so that all $x\in X$ satisfy $a<x<b$. An subset of
$\R$ is called \emph{compact}, if and only if it is closed and bounded.

For $(X,d_X), (Y,d_Y)$ being two metric spaces, we call a function $f:X\To Y$
\emph{continuous}, if for every $x_0\in X$ and every $\eps>0$ there is some
$\delta>0$ for which $d_X(x_0,y)<\delta$ implies $d_Y(f(x_0),f(y))<\eps$. If
the condition holds with the same $\eps,\delta$ for every $x_0\in A\subseteq
X$, then we call $f$ \emph{uniformly continuous} on the set $A$. It can be
shown that if a function $f$ is continuous on a compact set $A$, then it is
also uniformly continuous on $A$ (in general, however, continuity does not
imply uniform continuity). In the following, we will need this result only on
functions mapping compact subsets of $\R$ to probability distributions (the
space that we consider there will be the set of hyperreal numbers, which has
a topology but -- unfortunately -- neither a metric nor a norm).

On a space $X$, we write $\norm{\vec x}$ to denote the norm of a vector $\vec
x$. One example is the $\infty$-norm on $\R^n$, which is $\norm{\vec
x}_\infty = \norm{(x_1,\ldots,x_n)}_\infty =
\max\set{\abs{x_1},\ldots,\abs{x_n}}$ for every $\vec x\in\R^n$. This induces
the metric $d_\infty(\vec x,\vec y)=\norm{\vec x-\vec y}_\infty$.

It can be shown that every metric space is also a topological space, but the
converse is not true in general. However, the above definition of continuity
is (on metric spaces) equivalent to saying that a function $f:X\To Y$ is
continuous, if and only if every open set in $Y\in\mathcal{T_Y}$ has an open
preimage $f^{-1}(B)\in\mathcal{T_X}$, when $\mathcal{T}_X,\mathcal{T}_Y$
denote the topologies on $X$ and $Y$, respectively. This characterization
works without metrics and will be used later to prove continuity of payoff
functions (see lemma \ref{lem:continuity-of-scalar-products} and proposition
\ref{prop:continuity}).

\paragraph{Probabilities and Moments:}
Let $A\subset\Omega$ be a subset of some measurable\footnote{We will not
require any further details on measurability or $\sigma$-algebras in this
report, so we spare details or an intuitive explanation of the necessary
concepts here.} space $\Omega$ and $F$ be a probability distribution
function. The \emph{probability measure} $\Pr_F(A)$ is the Lebesgue-Stieltjes
integral $\Pr_{F}(A)=\int_{A} dF$ (note that this general formulation covers
both, discrete and continuous random variables on the same formal ground).
Whenever the distribution is obvious from the context, we will omit the
subscript to the probability measure, and simply write $\Pr(A)$ as a
shorthand of $\Pr_F(A)$.

All probability distribution functions $F$ that we consider in this report
will have a density function $f$ associated with them. If so, then we call
the closure of the set $\set{x:f(x)>0}$ the \emph{support} of $F$, denoted as
$\supp(F)$. A \emph{degenerate distribution} on $\R$ is one that assigns
probability mass 1 to a finite number (or more generally, a null-set) of
points in $\R$. If $\Pr(A)=1$ for a singleton set $A=\set{a}$ and $a\in\R$,
then we call this degenerate distribution a \emph{point-mass} or a
\emph{Dirac-mass}. We stress that such distributions do not have a density
function associated with them in general\footnote{at least not within the
space of normal functions; the Dirac-mass is, however, an irregular
generalized function (a concept that we will not need here).}.

Many commonly used distributions have infinite support, such as the Gaussian
distribution. The density function can, however, be cut off outside a bounded
range $[a,b]$ and re-scaled to normalize to a probability distribution again.
This technique lets us approximate any probability distribution by one with
compact support (a technique that will come handy in section
\ref{sec:extensions}).

The \emph{expectation} of a random variable is (by the law of large numbers)
the long-run average of realizations, or more rigorously, defined as
$\E{X}=\int_\Omega x dF(x)dx$. The $k$-th moment of a distribution is the
expectation of $X^k$, which we is denoted and defined as $m_X(k) :=
\E{X^k}=\int_{\Omega}x^k dF(x)$, or also $\E{X^k}=\int_{\Omega}x^k f(x)dx$,
if $F$ has a density function $f$. Special roles play the first four moments
or values derived thereof. One prominent example is the \emph{variance}
$\text{Var}(X)=\E{X-\E{X}}^2=\E{X^2}-(\E{X}^2)$ (this formula is known as
Steiner's theorem). Of particular importance is the so-called
\emph{moment-generating function} $\mu_X(s) = \E{\exp(s\cdot X)}$, from which
the $k$-th moment can be computed by taking the $k$-th order derivative
evaluated at the origin, i.e., we have
$\E{X^k}=\left.\frac{d^k}{ds^k}\mu_X(s)\right|_{s=0}$. Moments do not
necessarily exist for all distributions (an example is the
Cauchy-distribution, for which all moments are infinite), but exist for all
distributions with compact support (that can be used to approximate every
other distribution up to arbitrary precision).

Multivariate distributions model vector-valued random variables. Their
distribution is denoted as $F_{X,Y}$, or shorthanded as $F$. For an
$n$-dimensional distribution, the respective density function is then of the
form $f(x_1,\ldots,x_n)$, having the integral $\int_{\R^n}
f(x_1,\ldots,x_n)d(x_1,\ldots,x_n)=1$. This joint distribution in particular
models the interplay between the (perhaps mutually dependent) random
variables $X_1,\ldots,X_n$. The \emph{marginal distribution} of any of the
variables $X_i$ (where $1\leq i\leq n$) is the unconditional distribution of
$X_i$ no matter what the other variables do. Its density function is obtained
by ``integrating out'' the other variables, i.e.,
\[
f_{X_i}(x_i)=\int_{\R^{n-1}} f(x_1,\ldots,x_{i-1},x_{i+1},\ldots,x_n)
d(x_1,\ldots,x_{i-1},x_{i+1},\ldots,x_n).\]
 A (marginal) distribution is
called \emph{uniform}, if its support is bounded and its density is a
constant. The joint probability of a multivariate event, i.e.,
multidimensional set $A$ w.r.t. to a multivariate distribution $F_{X,Y}$, is
denoted as $\Prob{F_{X,Y}}(A)$. That is, the distribution w.r.t. which the
probabilities are taken are given in the subscript, whenever this is useful
or necessary to make things clear.

A particular important class of distributions are \emph{copulas}. These are
multivariate probability distribution functions on the $n$-dimensional
hypercube $[0,1]^n$, for which all marginal distributions are uniform. The
importance of copula functions is due to Skl{\aa}rs theorem, which tells that
the joint distribution $F$ of the random vector $(X_1,\ldots,X_n)$ can be
expressed in terms of marginal distribution functions and a copula function
$C$ as $F(x_1,\ldots,x_n)=C(F_1(x_1),\ldots,F_n(x_n))$. So, for example,
independence of events can be modeled by the simple product copula
$C(x_1,\ldots,x_n)=x_1\cdot x_2\cdots x_n$. Many other classes of copula
functions and a comprehensive discussion of the topic as such can be found in
\cite{Nelsen1999}.

\paragraph{Convexity and Concavity:}
Let $V$ be a vector-space. We call a set $A\subset V$ \emph{convex}, if for
any two points $x,y\in A$, the entire line connecting $x$ to $y$ is also
contained in $A$. Let $f:\R\To\R$ be a function and take two values $a<b$.
The function $f$ is called \emph{convex}, if for every two values $x,y$, the
line between $f(a)$ and $f(b)$ upper-bounds $f$ between $a$ and $b$. More
formally, let $L_{a,b}(x)$ be the straight line from $f(a)$ to $f(b)$, then
$f$ is convex if $f(x)\leq L_{a,b}(x)$ for all $a\leq x\leq b$. A function
$f$ is called \emph{concave} if $(-f)$ is convex.

\paragraph{Hyperreal Numbers and Ultrafilters:}
Take the set $\R^\infty$ of infinite sequences over the real numbers $\R$. On
this set, we can define the arithmetic operations $+$ and $\cdot$ elementwise
on two sequences $a=(a_1,a_2,a_3,\ldots)=(a_n)_{n\in\N}\in\R^\infty$ and
$b=(b_1,b_2,b_3,\ldots)=(b_n)_{n\in\N}\in\R^\infty$ by setting $a+b =
(a_1+b_1,a_2+b_2,a_3+b_3,\ldots)$ and $a\cdot b=(a\cdot b_1,a\cdot
b_2,a_3\cdot b_3,\ldots)$. The ordering of the reals, however, cannot be
carried over in this way, as the sequences $a = (1,4,2,\ldots)$ and $b =
(2,1,4,\ldots)$ would satisfy $\leq$ on some components and $\geq$ on some
others. To fix this, we need to be specific on which indices matter for the
comparison, and which do not. The resulting family of index-sets can be
characterized to be a so-called \emph{free ultrafilter}, which is defined as
follows: a family $\UF\subseteq\mathcal{P}(\N)$ is called a \emph{filter}, if
the following three properties are satisfied:
\begin{itemize}
  \item $\emptyset\notin\UF$
  \item closed under supersets: $A\subseteq B$ and $A\in\UF$ implies
      $B\in\UF$
  \item closed under intersection: $A,B\in\UF$ implies $A\cap B\in\UF$
\end{itemize}
If, in addition, $A\notin\UF$ implies that $\UF$ contains the complement set
of $A$, then $\UF$ is called an \emph{ultrafilter}. A simple example of a
filter is the \emph{Fr\'{e}chet}-filter, which is the family $\set{A:
\text{the complement of $A$ is finite}}$. A filter is called \emph{free}, if
it contains no finite sets, or equivalently, if any filter that contains
$\UF$ is equal to $\UF$, i.e., $\UF$ is maximal w.r.t. the
$\supseteq$-relation. An application of Zorn's lemma to the semi-ordering
induced by $\supseteq$ shows the existence of free ultrafilter as being
$\supseteq$-maximal elements, extending the Fr\'{e}chet-filter.

An ultrafilter naturally induces an equivalence relation on $\R^\infty$ by
virtue of calling two sequences $a=(a_n)_{n\in\N}, b=(b_n)_{n\in\N}$
$\equiv_\UF$-equivalent, if and only if $\set{i: a_i = b_i}\in\UF$, i.e., the
set of indices on which $a$ and $b$ coincide belongs to $\UF$. The $\leq$-
and $\geq$-relations can be defined in exactly the same fashion. The family
of equivalence classes modulo $\UF$ makes up the set of \emph{hyperreal
numbers}, i.e., $^*\R = \set{[a]_{\UF}: a\in\R^\infty} = \R^\infty\slash\UF$,
where $[a]_\UF = \set{b\in\R^\infty: a\equiv_\UF b}$. In lack of an exact
model of $^*\R$ due to the non-constructive existence assurance of the
necessary free ultrafilter, unfortunately, we are unable to practically do
arbitrary arithmetic in $^*\R$. It will be shown (later and in part two of
this report) that everything that needs to be computed practically works
without $\UF$ being explicitly known.

\paragraph{Elements of Game Theory:}
Let $N=\set{1,2,\ldots,n}$ be a finite set. Let $PS_i$ be a finite set of
actions, and denote by $PS_{-i}$ the cartesian product $PS_{-i} = PS_1\times
PS_2\times\cdots\times PS_{i-1}\times PS_{i+1}\times \cdots\times PS_n$,
i.e., the product of all $PS_j$ \emph{excluding} $PS_i$.

A finite non-cooperative $n$-person game is a triple $(N,H,S)$, where the set
$H=\set{u_i:PS_i\times PS_{-i}\to\R:i\in N}$ contains all player's payoff
functions, and the family $S=\set{PS_1,\ldots,PS_n}$ comprises the strategy
sets of all players. The attribute \emph{finite} is given to the game if and
only if all $PS_i$ are finite. An \emph{equilibrium strategy} is an element
$x^*=(x_1^*,\ldots,x_n^*)\in\prod_{i=1}^n PS_i$, so that all $i\in N$ have
\begin{equation}\label{eqn:von-neumann-equilibrium}
    u_i(x_i^*,x_{-i}^*)\geq u_i(x_i,x_{-i}^*).
\end{equation}
That is, action $x_i^*$ gives the maximal outcome for the $i$-th player,
provided that all other players follow their individual equilibrium
strategies. Otherwise said, no player has an incentive to solely deviate from
$x_i^*$, as this would only worsen the revenue from the gameplay\footnote{It
should be mentioned that this not necessarily rules out benefits for
\emph{coalitions} of players upon jointly deviating from the equilibrium
strategy. This, however, is subject of cooperative game-theory, which we do
not discuss here any further.}. It is easy to construct examples in which no
such equilibrium strategy exists. To fix this, one usually considers
\emph{repetitions} of the gameplay, and defines the revenue for a player as
the \emph{long-run average} of all payoffs in each round. Technically, this
assures the existence of equilibrium strategies in all finite games (Nash's
theorem). We will implicitly rely on this possibility here too, while
explicitly looking at the outcome of the game in a single round. As this is
-- by our fundamental hypotheses in this report -- a random variable itself,
condition \eqref{eqn:von-neumann-equilibrium} can no longer be soundly
defined, as random variables are not canonically ordered. The core of this
work will therefore be on finding a substitute for the $\geq$-relation, so as
to properly restate \eqref{eqn:von-neumann-equilibrium} when random variables
appear on both sides of the inequality.


\section{Optimal Decisions under Uncertainty}\label{sec:preferences}
Under the above setting, we can collect all scenarios of actions that player
1 (defender) and player 2 (attacker) may take in a tabular (matrix) fashion.
Our goal in this first step, is to soundly define what ``a best action''
would be in light of uncertain, indeed random, effects that actions on either
side cause, especially in lack of control about the other's actions. For that
matter, we will consider the scenario matrix $\mat A$ as given below, as the
payoff structure of some matrix-game, whose mathematical underpinning is the
standard setup of game-theory (see \cite{Fudenberg1991} for example), with
differences and necessary changes to classical theory of games, being
discussed in sections \ref{sec:generalized-game-theory} and later.

Let the following tableau be a collection of all scenarios of actions taken
by the defender (row-player) and attacker (column-player),
\[
\mat A=\left(
  \begin{array}{ccccc}
    R_{11} & \cdots & R_{1j} & \cdots & R_{1m} \\
    \vdots & \ddots & \vdots & \ddots & \vdots \\
    R_{i1} & \cdots & R_{ij} & \cdots & R_{im} \\
    \vdots & \ddots & \vdots & \ddots & \vdots \\
    R_{n1} & \cdots & R_{nj} & \cdots & R_{nm} \\
  \end{array}
\right),
\]
where the rows of the matrix $\mat A$ are labeled by the actions in $PS_1$,
and the columns of $\mat A$ carry the labels of actions from $PS_2$.

A \emph{security strategy} for player 1 is an optimal choice $i^*$ of a row
so that the risk, expressed by the random variable $R_{i^*j}$ is
``optimized'' over all possible actions $j\in PS_2$ of the opponent. Here, we
run into trouble already, as there is no canonical ordering on the set of
probability distributions.

To the end of resolving this issue, let us consider repetitions of the
gameplay in which each player can choose his actions repeatedly and
differently, in an attempt to minimize risk (or damage). This corresponds to
situations in which ``the best'' configuration simply does not exist, and we
are forced to repeatedly change or reconsider the configuration of the system
in order to remain protected.

In a classical game-theoretic approach, this takes us to the concept of
\emph{mixed strategies}, which are discrete probability distributions over
the action spaces of the players. Making this rigorous, let $S(PS_i)$ for
$i=1,2$ denote the simplex over $PS_i$, i.e., the space of all discrete
probability distributions supported on $PS_i$. More formally, given the
support $X$, the set $S(X)$ is
\[
    S(X) := \set{(p_1,\ldots,p_k)\in\R^k: k=\abs{X}, \sum_{i=1}^k p_i = 1, p_i\geq 0\,\forall i}.
\]
A randomized decision is thus a rule $\vec p=(p_1,\ldots,p_k)\in S(PS)$ to
choose from the available actions $\set{1,2,\ldots,k}$ from the action set
$PS$ (which is $PS_1$ or $PS_2$ hereafter) with corresponding probabilities
$p_i$. We assume the ordering of the actions to be arbitrary but fixed (for
obvious reasons).

Now, we return to the problem of what effect to expect when the current
configuration of the system is randomly drawn from $PS_1$, and the
adversary's action is another random choice from $PS_2$. For that matter, let
us simplify notation by putting $S_1 := S(PS_1), S_2 := S(PS_2)$ and let the
two mixed strategies be $\vec p\in S_1$ for player 1, and $\vec q\in S_2$ for
player 2.

Since the choice from the matrix $\mat A$ is random, where the row is drawn
with likelihoods as specified by $\vec p$, and the column is drawn from $\vec
q$, the law of total probability yields for the outcome $R$,
\begin{equation}\label{eqn:game-outcome}
    \Prob{R\leq r} = \sum_{i,j}\Prob{R_{ij}\leq r|i,j}\Prob{i,j},
\end{equation}
where $\Prob{R_{ij}\leq r|i,j}$ is the conditional probability of $R_{ij}$
given a particular choice $(i,j)$, and $\Prob{i,j}$ is the (unconditional)
probability for this choice to occur. Section \ref{sec:dependent-actions}
gives some more details on how $\Prob{i,j}$ can be modeled and expressed.

Denote by $F(\vec p,\vec q)$ the distribution of the game's outcome under
strategies $(\vec p,\vec q)\in S_1\times S_2$, then $\Prob{R\leq r}=F(r)$
depends on $(\vec p,\vec q)$, and \eqref{eqn:game-outcome} can be rewritten
as
\begin{equation}\label{eqn:game-outcome-distribution}
    \Prob{R\leq r}=(F(\vec p,\vec q))(r)=\sum_{i,j}F_{ij}(r)C_{\vec p,\vec q}(i,j),
\end{equation}
where $C_{\vec p,\vec q}(i,j)=\Prob{i,j}$ will be assumed as continuous in
$\vec p$ and $\vec q$ for technical reasons that will become evident later
(during the proof of proposition \ref{prop:continuity}). Note that the
distribution $F$ via the function $C$ explicitly depends on the choices $\vec
p,\vec q$, and is to be ``optimally shaped'' w.r.t. to these two variables.
The argument $r\in\R$ to the function $F(\vec p,\vec q)(\cdot)$ is the
(random) ``revenue'', whose uncertainty is outside any of the two player's
influence (besides shaping $F$ by proper choices of $\vec p$ and $\vec q$).

The ``revenue'' $R$ in the game can be of manifold nature, such as
\begin{itemize}
  \item Risk \underline{r}esponse of society; a quantitative measure that
      could rate people's opinions and confidence in the utility
      infrastructure
  \item \underline{R}epair cost to recover from an incident's implied
      damage,
  \item \underline{R}eliability, if the game is about whether or not a
      particular quality of service can be kept up,
  \item etc.
\end{itemize}

\begin{rem} In the simplest case of
independent actions, we would set $C_{\vec p,\vec q}(i,j)=p_i\cdot q_j$ when
$\vec p=(p_1,\ldots,p_n),\vec q = (q_1,\ldots, q_m)$. This choice, along with
assuming $R_{ij}$ to be constants rather than random variables, recreates the
familiar matrix-game payoff functional $\vec p^T\mat A\vec q$ from
\eqref{eqn:game-outcome-distribution}. Hence,
\eqref{eqn:game-outcome-distribution} is a first generalization of matrix
games to games with uncertain outcome, which for the sake of flexibility and
generality, is ``distribution-valued''.
\end{rem}

\begin{rem} It may be in reality the case that actions of the two players are
\emph{not} chosen independently, for example, if both of the players possess
some common knowledge or access to a common source of information. In
game-theoretic terms, this would lead to so-called \emph{correlated
equilibria} (see \cite{Fudenberg1991}), in which the players share two
correlated random variables that influence their choices. Things here are
nevertheless different, as no bidirectional flow of information can be
assumed like for correlated equilibria (the attacker won't inform the utility
infrastructure provider about anything in advance, while information from the
provider may somehow leak out to the adversary).
\end{rem}

\subsection{Choosing Actions (In)dependently}\label{sec:dependent-actions}

The concrete choice of the function $C_{\vec p,\vec q}$ is only subject to
continuity in $\vec p,\vec q$ for technical reasons that will receive a
closer look now. The general joint probability of the scenario $(i,j)$ w.r.t.
the marginal discrete distribution vectors $\vec p,\vec q$ is $\Pr_{\vec
p,\vec q}\set{i,j}=\Pr_{\vec p,\vec q}\set{X=i,Y=j}=C_{\vec p,\vec q}(i,j)$
in \eqref{eqn:game-outcome}. Under independence of the random choices
$X\sim\vec p, Y\sim \vec q$ can be written as $\Prob{i,j} =
\Prob{X=i}\Prob{Y=j}=p_iq_j$.

Now, let us consider cases where the choices are \emph{not} independent, say,
if one player observes the other player's actions and can react on them (or
if both players have access to common source of information).

Sklar's theorem implies the existence of a copula-function $C$ so that the
joint distribution $F_{(X,Y)}$ can be written in terms of the copula $C$ and
the marginal distributions $F_X$, corresponding to the vector $\vec p$, and
$F_Y$, corresponding to the vector $\vec q$,
\[
    F_{(X,Y)}(i,j) = \Prob{X\leq i,Y\leq j} = C(F_X(i), F_Y(j)).
\]
\begin{align}
  \Prob{i,j} &= \Prob{X=i,Y=i} = \Prob{X\leq i,Y\leq j} - \Prob{X\leq i-1,Y\leq j}\nonumber \\
&\qquad\qquad- \Prob{X\leq i,Y\leq j-1}+\Prob{X\leq i-1,Y\leq j-1}\nonumber\\
&= C(F_X(i),F_Y(j)) - C(F_X(i-1),F_Y(j)) \nonumber\\
&\qquad\qquad- C(F_X(i), F_Y(j-1)) + C(F_X(i-1),F_Y(j-1))\nonumber\\
&= C(p_i,q_j) - C(p_{i-1},q_j) - C(p_i,q_{j-1}) + C(p_{i-1},q_{j-1})\label{eqn:copula-representation}.
\end{align}
Thus, the function $C_{\vec p,\vec q}$ can be constructed from
\eqref{eqn:copula-representation} based on the copula $C$ (which must exist).
Continuity of $C_{\vec p,\vec q}$ thus hinges on the continuity of the copula
function. At least two situations admit a choice of $C$ that makes $C_{\vec
p,\vec q}$ continuous:
\begin{itemize}
  \item Independence of actions: $C(x,y) := x\cdot y$
  \item Complete lack of knowledge about the interplay between the action
      choices, in which case we can set $C(x,y):=\min\set{x,y}$.

This choice is justified upon the well-known Fr\'{e}chet-Hoeffding bound,
which says that \emph{every} $n$-dimensional copula function $C$ satisfies
\[
C(u_1,u_2,\ldots,u_n)\leq\min\set{u_1,\ldots,u_n}.
\]
Since the $\min$-function is itself a copula, it can be chosen if a
dependency is known to exist, but with no details on the particular nature
of the interplay. Observe that this corresponds to the well-known
\emph{maximum-principle of system security}, where the overall system risk
is determined from the maximum risk among its components (alternatively,
you may think of a chain to be as strong as its weakest element; which
corresponds to the $\min$-function among all indicators $u_1,\ldots,u_n$).
\end{itemize}

\subsection{Comparing Payoff Distributions}

There appears to be no canonical way to compare payoff distributions, as
$F(\vec p,\vec q)$ can be determined by an arbitrary number of parameters,
thus introducing ambiguity in how to compare them. To see this, simply
consider the set of normal distributions $\ND(\mu,\sigma^2)$ being determined
by two parameters $\mu$ and $\sigma>0$. Since the pair $(\mu,\sigma)$
uniquely determines the distribution function, a comparison between two
members $F_1,F_2\in\ND$ amounts to a criterion to compare two-dimensional
vectors $(\mu,\sigma)\in\R\times\R^+\subset\R^2$. It is well-known that
$\R^2$ is not ordered (as being isomorphic to $\C$, on which provably no
order exists; see \cite{Endl1995} for a proof), and hence there is no natural
ordering on the set of probability distributions either.

Despite this sounding like bad news, we can actually construct an alternative
characterization of probability distributions on a new space, in which the
distributions of interest, in our case $F(\vec p,\vec q)$ will all be members
of a totally ordered subset.


To this end, we will rely on a characterization of a probability distribution
of the random variable $R\sim F(\vec p,\vec q)$ via the sequence
$(m_R(k))_{k\in\N}$ of its moments. The $k$-th such moment is from
\eqref{eqn:game-outcome-distribution} and by assumption
\ref{asm:finity-of-repair-costs} found to be
\begin{align}
    [\E{R^k}](\vec p,\vec q) &= \int_{-\infty}^\infty x^kdF(\vec p,\vec q)=\int_{-\infty}^{\infty} x^k\sum_{i,j}f_{ij}(x)C_{\vec p,\vec q}(i,j)dx\nonumber\\
    &=\sum_{i,j}C_{\vec p,\vec q}(i,j)\int_{-\infty}^{\infty} x^kf_{ij}(x)dx=\sum_{i,j}C_{\vec p,\vec q}(i,j)\E{R_{ij}^k},\label{eqn:moment}
\end{align}
where the sum runs over $i=1,2,\ldots,n$ and $j=1,2,\ldots,m$, and $f_{ij}$
is the probability density of $R_{ij}$ for all $i,j$. Notice that the
boundedness condition in assumption \ref{asm:finity-of-repair-costs} assures
existence and finiteness of all these moments. However, assumption
\ref{asm:finity-of-repair-costs} yields even more: since $R\sim F(\vec p,\vec
q)$ is a random variable within $[0,\infty)$ (nonnegativity) and has finite
moments by the boundedness assumption, the distribution $F(\vec p,\vec q)$ is
\emph{uniquely} determined by the sequence of moments. This is made rigorous
by the following lemma:

\begin{lem}\label{lem:uniqueness}
Let two random variables $X,Y$ have their moment generating functions
$\mu_X(s), \mu_Y(s)$ exist within a neighborhood $U_\eps(0)$. Assume that
$\E{X^k}=\E{Y^k}$ for all $k\in\N$. Then $X$ and $Y$ have the same
distribution.
\end{lem}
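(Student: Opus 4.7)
The plan is to exploit the fact that, under the given hypothesis, the moment generating function is an analytic function on a neighborhood of the origin, and that its power series coefficients are precisely the (scaled) moments; equality of all moments then forces $\mu_X = \mu_Y$ locally, and a standard uniqueness theorem for moment generating functions yields equality of the distributions.

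First I would observe that, since $\mu_X(s) = \E{\exp(sX)}$ exists on $U_\eps(0)$, one may differentiate under the integral (or expectation) any number of times at $s=0$ to obtain $\mu_X^{(k)}(0) = \E{X^k}$. Combining this with the absolute convergence of the exponential series on $U_\eps(0)$, standard arguments (dominated convergence applied to the partial sums $\sum_{k=0}^{n} (sX)^k/k!$, bounded by $\exp(|s|\cdot|X|)$, which is integrable for $|s|<\eps$) give the representation
\[
\mu_X(s) = \sum_{k=0}^{\infty} \frac{\E{X^k}}{k!}\, s^k, \qquad s\in U_\eps(0),
\]
and analogously for $\mu_Y$. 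The hypothesis $\E{X^k} = \E{Y^k}$ for every $k\in\N$ therefore implies that the two power series have identical coefficients, so $\mu_X(s) = \mu_Y(s)$ throughout $U_\eps(0)$.

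The remaining step is to pass from agreement of the moment generating functions on a neighborhood of $0$ to agreement of the underlying distributions. For this I would invoke the classical uniqueness theorem for moment generating functions: if $\mu_X$ and $\mu_Y$ coincide on some real neighborhood of $0$, then $X$ and $Y$ are equal in distribution. The usual route is to extend $\mu_X$ and $\mu_Y$ to the complex strip $\{z\in\C : |\mathrm{Re}(z)|<\eps\}$, where they are holomorphic; the restriction to the imaginary axis yields the characteristic functions $\phi_X(t)=\mu_X(it)$ and $\phi_Y(t)=\mu_Y(it)$, which must then coincide for all real $t$ by analytic continuation from the real neighborhood where they already agree. Once $\phi_X \equiv \phi_Y$, the Fourier inversion theorem (equivalently, the uniqueness theorem for characteristic functions) gives $F_X = F_Y$.

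The main obstacle is the last step: the move from local equality of the $\mu$'s to global equality of the characteristic functions, and hence of the distributions. This is not a trivial manipulation but relies on the nontrivial fact that existence of the moment generating function in a real neighborhood of $0$ implies analyticity in a complex strip, which in turn supplies the analytic continuation that links real-axis behavior to imaginary-axis behavior. Everything else is routine bookkeeping about swapping sum and expectation under the dominated convergence theorem.
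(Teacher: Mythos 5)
Your proposal is correct and follows essentially the same route as the paper's proof: identify $\E{X^k}$ with $\mu_X^{(k)}(0)$, expand the moment generating function as a power series on the neighborhood of $0$ where it exists, conclude $\mu_X=\mu_Y$ locally, and then invoke the uniqueness theorem for moment generating functions. The only difference is that you sketch the proof of that uniqueness theorem (analytic continuation to the characteristic function and Fourier inversion) where the paper simply cites it, which is a welcome but inessential elaboration.
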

The proof is merely a collection of well-known facts about moment generating
functions and the identity of their power-series expansions. For convenience
and completeness, we nevertheless give the proof in (almost) full detail.

\begin{proof}[of lemma \ref{lem:uniqueness}]
Let $Z$ be a general random variable. The finiteness of the moment-generating
function $\mu_Z$ within some open set $(-s_0,s_0)$ with $s_0>0$ yields
$\E{Z^k}=\mu_z^{(k)}(0)$ via the $k$-th order derivative of $\mu_Z$
\cite[Theorem 3.4.3]{Evans2004}. Furthermore, if the moment generating
function exists within $(-s_0,s_0)$, then it has a Taylor-series expansion
(cf. \cite[Sec.11.6.1]{Golberg1984}).
\begin{equation}\label{eqn:taylor-series}
  \mu_Z(s) = \sum_{k=0}^\infty \frac{\mu_Z^{(k)}(0)}{k!}s^k,\qquad\forall s\in(-s_0,s_0).
\end{equation}
Identity of moments between $X$ and $Y$ (the lemma's hypothesis) thus implies
the identity of the Taylor-series expansions of $\mu_X$ and $\mu_Y$ and in
turn the identity $\mu_X(s)=\mu_Y(s)$ on $(-s_0,s_0)$. This equation finally
implies that $X$ and $Y$ have the same distribution by the uniqueness theorem
of moment-generating functions \cite[Theorem 3.4.6]{Evans2004}.
\end{proof}

Lemma \ref{lem:uniqueness} is the permission to characterize random variables
only by their moment-sequence to uniquely pin-down the probability
distribution, i.e., we will hereafter write $m_R(k) := \E{R^k}$, and use
\begin{equation}\label{eqn:sequence-representation}
(m_R(k))_{k\in\N}, \quad \text{to represent the random variable}\quad  R\sim F(\vec p,\vec q).
\end{equation}

Let $\R^\infty$ denote the set of all sequences, on which we define a partial
ordering by virtue of the above characterization as follows: let $F_1 =
F(\vec p_1,\vec q_1), F_2 = F(\vec p_2,\vec q_2)$ be two distributions
defined by \eqref{eqn:game-outcome-distribution}. As a first try, we could
define a preference relation between two distributions $F_1,F_2$ by comparing
their moment sequences element-wise, i.e., we would prefer $F_1$ over $F_2$
if the respective moments satisfy $m_{R_1}(k)\leq m_{R_2}(k)$ for all $k$
whenever $R_1\sim F_1$ and $R_2\sim F_2$.

It must be stressed that without extra conditions, this ordering is at most a
partial one, since we could allow infinitely alternating values for the
moments in both sequences. To make the ordering total, we have to be specific
on which indices matter and which don't. The result will be a standard
ultrapower construction, so let $\UF$ denote an arbitrary ultrafilter.
Fortunately, the preference ordering by comparing moments elementwise is
ultimately independent of the particular ultrafilter in use. This is made
precise in theorem \ref{thm:ordering-invariance} that is implied by a simple
analysis of continuous distributions. We treat these first and discuss the
discrete case later, as all of our upcoming findings remain valid under the
discrete setting.

\paragraph{The Continuous Case:}

\begin{lem}\label{lem:ordering-invariance} For any two
probability distributions $F_1,F_2$ and associated random variables $R_1\sim
F_1, R_2\sim F_2$ that satisfy assumption \ref{asm:finity-of-repair-costs} on
the compact set $\Omega$ (covering both supports) and in the continuous case,
assume that $f_1(a)\neq f_2(a)$ at $a=\max\Omega$. Then, there is a $K\in\N$
so that either $[\forall k\geq K: m_{R_1}(k)\leq m_{R_2}(k)]$ or $[\forall
k\geq K: m_{R_1}(k)\geq m_{R_2}(k)]$.
\end{lem}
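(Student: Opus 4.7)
The plan is to exploit a Laplace-type asymptotic for the moments: as $k \to \infty$, the factor $x^k$ concentrates at the upper end of the support, so the high moments of each $R_i$ are controlled by the behavior of $F_i$ near $b_i := \sup\supp(F_i)$, which is finite by Assumption~\ref{asm:finity-of-repair-costs}. The proof then splits into cases according to whether the two suprema coincide.

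Suppose first $b_1 \neq b_2$, say $b_1 > b_2$. Since $R_2 \leq b_2$ almost surely, $m_{R_2}(k) \leq b_2^k$. On the other hand, $b_1 \in \supp(F_1)$ gives $\Prob{R_1 \geq b_1 - \eps} \geq c > 0$ for every sufficiently small $\eps > 0$, hence $m_{R_1}(k) \geq c(b_1 - \eps)^k$. Picking $\eps < b_1 - b_2$ makes $m_{R_1}(k)/m_{R_2}(k) \to \infty$, so $m_{R_1}(k) > m_{R_2}(k)$ for all $k \geq K$, which is one of the two alternatives in the lemma.

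In the equal-suprema case $b_1 = b_2 =: b$, I would write $m_{R_1}(k) - m_{R_2}(k) = \int_1^b x^k h(x)\,dx$ with $h = f_1 - f_2$ continuous on $[1,b]$, and split the integral at $b - \delta$ for small $\delta > 0$. The contribution over $[1, b - \delta]$ is bounded by $(b - \delta)^k \|h\|_\infty (b - 1)$, which is exponentially smaller than $b^k$. On $[b - \delta, b]$, continuity of $h$ at $b$ forces $h(x) \to h(b)$ uniformly as $\delta \to 0$, so $\int_{b - \delta}^b x^k h(x)\,dx \sim h(b)\, b^{k+1}/(k+1)$ as $k \to \infty$. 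Consequently $m_{R_1}(k) - m_{R_2}(k) \sim h(b)\, b^{k+1}/(k+1)$, and if $h(b) \neq 0$ its sign is eventually $\mathrm{sign}(h(b))$, delivering the claim.

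The hard part is the subcase $h(b) = f_1(b) - f_2(b) = 0$: the leading Laplace asymptotic vanishes and one must descend to the next nonvanishing order of $h$ near $b$. Under additional regularity of the densities (e.g.\ differentiability at $b$), integration by parts produces a constant-sign higher-order asymptotic; in the purely continuous setting, however, one has to exclude pathological oscillation of $h$ near $b$, either by invoking smoothness implicit in the application or by an approximation argument that reduces to the nondegenerate case just treated. This boundary analysis is where I expect the main technical effort to lie; the rest is a routine split-the-integral estimate.
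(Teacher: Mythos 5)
Your overall strategy---that the high moments act as a Laplace-type functional concentrating at the right end of the support, so the comparison is decided by the behaviour of $f_1-f_2$ there---is exactly the idea behind the paper's proof, and your treatment of the cases $b_1\neq b_2$, and $b_1=b_2$ with $f_1(b)\neq f_2(b)$, is sound. However, the subcase you defer as ``the hard part,'' namely $h(b)=f_1(b)-f_2(b)=0$, is not a pathological corner: it is the \emph{generic} situation, because continuous densities with compact support typically vanish at the right endpoint of that support (any two triangular, beta-type, or truncated-and-tapered densities on a common $[1,b]$ satisfy $h(b)=0$ automatically). Leaving that subcase open means the lemma remains unproved in most of the cases it is meant to cover, so this is a genuine gap rather than a deferred technicality.

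The paper closes this case without any differentiability assumption or higher-order asymptotic expansion, by a device you could adopt: rather than evaluating $h$ at the endpoint, locate the rightmost region on which $h$ keeps a constant sign, choose a compact interval $[a,b']$ inside it on which $\abs{h}\geq\lambda_2>0$ (possible by continuity), and bound the entire contribution of $[1,a]$ crudely by $\lambda_1\int_1^a x^k\,dx=\lambda_1 a^{k+1}/(k+1)$ with $\lambda_1\geq\norm{h}_\infty$. Since $a<b'$, the favourable term $\lambda_2\bigl((b')^{k+1}-a^{k+1}\bigr)/(k+1)$ eventually dominates the unfavourable one, so $\Delta(k)=m_{R_1}(k)-m_{R_2}(k)$ has a stable sign for all large $k$; no information about the value or derivatives of $h$ at $b$ is needed. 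Two caveats: this argument, like yours, silently presupposes that $h$ has an eventual constant sign as one approaches the rightmost point where $f_1$ and $f_2$ differ---the oscillation problem you flag is real and the paper's proof assumes it away as well---but the interval-comparison trick disposes of the $h(b)=0$ case entirely, which your Watson-lemma asymptotics cannot.
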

\begin{proof} Let $f_1, f_2$ denote the densities of the distributions $F_1,F_2$.
Fix the smallest $b^*>1$ so that $\Omega:=[1,b^*]$ covers both the supports
of $F_1$ and $F_2$. Consider the difference of the $k$-th moments, given by
\begin{align}
    \Delta(k) := \E{R_1^k}-\E{R_2^k} &= \int_{\Omega}x^k f_1(x)dx - \int_{\Omega}x^k f_2(x)dx\nonumber\\
    &=\int_{\Omega}x^k(f_1-f_2)(x)dx\label{eqn:difference-of-moments}.
\end{align}
Towards a lower bound to \eqref{eqn:difference-of-moments}, we distinguish
two cases:
\begin{enumerate}
  \item If $f_1(x)>f_2(x)$ for all $x\in\Omega$, then $(f_1-f_2)(x)>0$ and
      because $f_1,f_2$ are continuous, their difference attains a minimum
      $\lambda_2>0$ on the compact set $\Omega$. So, we can lower-bound
      \eqref{eqn:difference-of-moments} as
      $\Delta(k)\geq\lambda_2\int_{\Omega}x^kdx\To+\infty$, as
      $k\To\infty$.
  \item Otherwise, we look at the right end of the interval $\Omega$, and
      define
        \[
            a^* :=\inf\set{x\geq 1: f_1(x)>f_2(x)}.
        \]
      Without loss of generality, we may assume $a^*<b^*$. To see this,
note that if $f_1(b^*)\neq f_2(b^*)$, then the
      continuity of $f_1-f_2$ implies $f_1(x)\neq f_2(x)$ within a range
      $(b^*-\eps,b^*]$ for some $\eps>0$, and $a^*$ is the supremum of all
      these $\eps$. Otherwise, if $f_1(x)=f_2(x)$ on an entire interval
      $[b^*-\eps,b^*]$ for some $\eps>0$, then $f_1\not>f_2$ on $\Omega$
      (the opposite of the previous case) implies the existence of some
      $\xi<b^*$ so that $f_1(x)<f_2(x)$, and $a^*$ is the supremum of all
      these $\xi$  (see figure \ref{fig:density-lower-bounds} for an
      illustration). In case that $\xi=0$, we would have $f_1\geq f_2$ on
      $\Omega$, which is either trivial (as $\Delta(k)=0$ for all $k$ if
      $f_1=f_2$) or otherwise covered by the previous case.

      In either situation, we can fix a compact interval $[a,b]\subset
      (a^*,b^*)\subset[1,b^*]=\Omega$ and two constants
      $\lambda_1,\lambda_2>0$ (which exist because $f_1,f_2$ are bounded as
      being continuous on the compact set $\Omega$), so that the function
      \[
        \ell(k,x) := \left\{
                     \begin{array}{rl}
                       -\lambda_1x^k, & \hbox{if }1\leq x<a; \\
                       \lambda_2x^k, & \hbox{if }a\leq x\leq b.
                     \end{array}
                   \right.
      \]
        lower-bounds the difference of densities in
        \eqref{eqn:difference-of-moments} (see figure
        \ref{fig:density-lower-bounds}), and

    \begin{figure}
          \centering
            \includegraphics[scale=0.9]{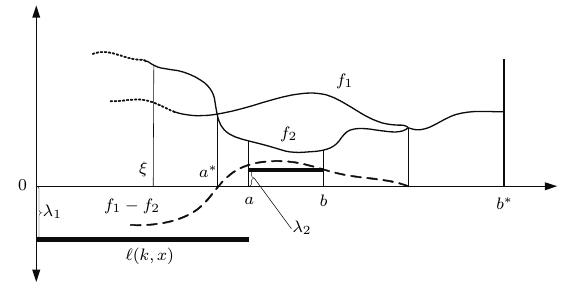}
          \caption{Lower-bounding the difference of densities}\label{fig:density-lower-bounds}
        \end{figure}

      \begin{align*}
        \Delta(k)=\int_1^{b^*}x^k(f_1-f_2)(x)dx &\geq \int_1^b\ell(x,k)dx \\
        &=-\lambda_1\int_1^ax^kdx + \lambda_2\int_a^bx^kdx\\
        &=-\frac{a^{k+1}}{k+1}(\lambda_1+\lambda_2)+\lambda_2\frac{b^{k+1}}{k+1}\To+\infty,
      \end{align*}
        as $k\To\infty$ due to $a<b$ and because $\lambda_1,\lambda_2$ are
        constants that depend only on $f_1,f_2$.

    In both cases, we conclude that, unless $f_1=f_2$, $\Delta(k)>0$ for
    sufficiently large $k\geq K$ where $K$ is finite.
\end{enumerate}

\end{proof}

\begin{thm}\label{thm:ordering-invariance} Let $\F$ be the set of
distributions that satisfy assumption \ref{asm:finity-of-repair-costs}.
Assume the elements of $\F$ to be represented by hyperreal numbers in
$\R^\infty\slash\UF$, where $\UF$ is any free ultrafilter. There exists a
total ordering on a dense subset of $\F$ that is independent of $\UF$.
\end{thm}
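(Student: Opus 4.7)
The plan is to use the moment sequence representation to embed $\F$ into the hyperreals and then transport the natural coordinatewise ordering on $\R^\infty$ into a total ordering via the ultrafilter quotient. The first step is to assign to each distribution $F\in\F$ its moment sequence $m_F=(m_F(k))_{k\in\N}\in\R^\infty$. Assumption \ref{asm:finity-of-repair-costs} guarantees that all moments are finite, and Lemma \ref{lem:uniqueness} guarantees that $F\mapsto m_F$ is injective, so $F$ can safely be identified with the equivalence class $[m_F]_\UF\in\R^\infty\slash\UF$.

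Next, I would define the candidate ordering on $\F$ by $F_1\preceq F_2$ iff $m_{F_1}(k)\leq m_{F_2}(k)$ for all sufficiently large $k$, and show that this agrees with the hyperreal order induced by any free ultrafilter. Lemma \ref{lem:ordering-invariance} is tailor-made for this: for any $F_1,F_2\in\F$ there is a $K\in\N$ so that one of the two sets
\[
A_{\leq}=\set{k\in\N: m_{F_1}(k)\leq m_{F_2}(k)},\qquad A_{\geq}=\set{k\in\N: m_{F_1}(k)\geq m_{F_2}(k)}
\]
contains the cofinite tail $\set{k:k\geq K}$. Since every free ultrafilter $\UF$ extends the Fr\'echet filter, every cofinite set belongs to $\UF$, and by closure under supersets at least one of $A_\leq, A_\geq$ lies in $\UF$ as well. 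Translating this to the quotient yields $[m_{F_1}]_\UF\leq [m_{F_2}]_\UF$ or $[m_{F_1}]_\UF\geq [m_{F_2}]_\UF$ respectively, and in either case the decision is determined purely by which tail inequality holds in $\R$, a fact that never invokes $\UF$ itself. Hence the induced ordering on $\F$ is the same for \emph{every} choice of free ultrafilter.

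It remains to check that this yields a genuine total order on $\F$. Totality follows directly from Lemma \ref{lem:ordering-invariance}, which always delivers one of the two cases. For antisymmetry I would invoke the stronger conclusion drawn inside the proof of Lemma \ref{lem:ordering-invariance}: whenever $f_1\neq f_2$ one in fact has $\Delta(k)>0$ (or $<0$) \emph{strictly} for all sufficiently large $k$, so the set on which the two moment sequences coincide is finite and therefore not in any free ultrafilter. Consequently, $F_1\preceq F_2$ and $F_2\preceq F_1$ together force $f_1=f_2$, i.e.\ $F_1=F_2$, while $F_1\neq F_2$ yields a strict hyperreal inequality. Reflexivity and transitivity are immediate from the corresponding properties of $\leq$ on $\R$ applied on the relevant cofinite tails (whose intersection is again cofinite, hence in $\UF$).

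The principal obstacle is not in any of these verification steps individually but in ensuring that the ordering we define really does not depend on the choice of free ultrafilter; this is precisely why Lemma \ref{lem:ordering-invariance} is crucial. Its role is to force the comparison into the Fr\'echet filter, which is the unique common substructure of all free ultrafilters on $\N$. Once this is observed, the theorem is essentially a formality: Lemma \ref{lem:uniqueness} makes the moment representation faithful, Lemma \ref{lem:ordering-invariance} makes the coordinatewise $\leq$ decidable on a cofinite tail, and the ultrafilter machinery of $^*\R$ simply lifts this to a total order that is insensitive to the (non-constructive) choice of $\UF$.
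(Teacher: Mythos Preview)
Your proposal is correct and follows essentially the same route as the paper: both use Lemma \ref{lem:ordering-invariance} to force the moment comparison onto a cofinite tail, then observe that any free ultrafilter contains this tail (the paper argues via the finite complement not lying in a free $\UF$, you via extension of the Fr\'echet filter), so the induced order is independent of $\UF$. Your write-up is somewhat more thorough in that you explicitly invoke Lemma \ref{lem:uniqueness} for faithfulness of the moment embedding and verify the order axioms (totality, antisymmetry, reflexivity, transitivity), whereas the paper leaves these implicit; but the core argument is the same.
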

\begin{proof}
Let $F_1, F_2$ be two probability distributions, and let $R_1\sim F_1,
R_2\sim F_2$. Let $\Omega$ be the union of both supports of $R_1,R_2$, and
put $a=\max\Omega$. If $f_1(a)=f_2(a)$, then we can truncate the support and
the densities $f_1,f_2$ by some (small) $\eps>0$ such that $f_1(a-\eps)\neq
f_2(a-\eps)$. If no such $\eps$ would exist, the two densities are identical
and we have equivalence. Otherwise, by continuity, we can choose $\eps$
arbitrarily small to approximate $F_1$ and $F_2$ to any desired precision,
thus picking an element of a dense subset of $\F$ on which the conditions of
Lemma \ref{lem:ordering-invariance} hold. Call them $F_1,F_2$ again to ease
our notation. Lemma \ref{lem:ordering-invariance} assures the existence of
some $K\in\N$ so that $F_1\preceq F_2$ by $m_{R_1}(k)\leq m_{R_2}(k)$
whenever $k\geq K$. Let $L$ be the set of indices where $m_{R_1}(k)\leq
m_{R_2}(k)$, then complement set $\N\setminus L$ is finite (it has at most
$K-1$ elements). Let $\UF$ be an arbitrary ultrafilter. Since $\N\setminus L$
is finite, it cannot be contained in $\UF$ as $\UF$ is free. And since $\UF$
is an ultrafilter, it must contain the complement a set, unless it contains
the set itself. Hence, $L\in\UF$, and the claim follows. The converse case is
treated analogously.
\end{proof}

\begin{rem}\label{rem:responding-to-vincent}
If two distributions alternate infinitely often within a compact interval,
then the two would not compare by their tail masses as Lemma
\ref{lem:ordering-invariance}, without the hypothesis $f_1(a) \neq f_2(a)$.
Examples are constructible by ``squeezing'' alternating densities (cf.
Example \ref{exa:alternating-distributions}) into a compact range. However,
the assumption made by Lemma \ref{lem:ordering-invariance} is nonetheless
mild, since we can find truncations to make the distribution satisfy the
conditions, and choose the truncation to approximate the given densities up
to any precision that we desire. This argument extends to the order relation
even to approximations of distributions with unbounded support.
\end{rem}

Now, we can state our preference criterion on distributions on the quotient
space $\F \subset \R^{\infty}\slash\UF$, in which each probability
distribution of interest is represented by its sequence of moments. Thanks to
theorem \ref{thm:ordering-invariance}, there is no need to construct the
ultrafilter $\UF$ in order to well-define best responses, since two
distributions will compare in the same fashion under any admissible choice of
$\UF$.

\begin{defn}[Preference Relation over Probability Distributions]\label{def:preference}
Let $R_1,R_2$ be two random variables whose distributions $F_1, F_2$ satisfy
assumption \ref{asm:finity-of-repair-costs}. We \emph{prefer} $F_1$
\emph{over} $F_2$ relative to an ultrafilter $\UF$, written as
\begin{equation}\label{eqn:partial-ordering}
    F_1\preceq F_2:\iff \exists K\in\N\text{ s.t. }\forall k\geq K: m_{R_1}(k)\leq m_{R_2}(k)
\end{equation}
\emph{Strict preference} of $F_1$ over $F_2$ is denoted as
\[
    F_1\prec F_2:\iff \exists K\in\N\text{ s.t. }\forall k\geq K: m_{R_1}(k)<m_{R_2}(k)
\]
\end{defn}
Theorem \ref{thm:ordering-invariance} establishes this definition to be
compatible with (in the sense of being a continuation of) the ordering on the
hyperreals $\R^\infty\slash\UF$, being defined as $a\leq b$ iff $\set{i:
a_i\leq b_i}\in\UF$, when $a, b$ are represented by sequences
$(a_i)_{i\in\N}, (b_i)_{i\in\N}$. The more general version of Definition
\ref{def:preference} would thus put two random variables $X,Y$ into order $X
\preceq Y$ if and only if their representative moment-sequences satisfy
$(E(X^k))_{k\in\N}\leq (E(Y^k))_{k\in\N}$ when treated as hyperreal numbers.
The above results restrict the ordering relation to a dense subset of $\F$,
but have the appeal of a nicer interpretation (Theorem \ref{thm:tail-bounds})
and algorithmic decidability.

By virtue of the $\preceq$-relation, we can define an equivalence $\equiv$
between two distributions in the canonical way as
\begin{equation}\label{eqn:equivalence-relation}
    F_1\equiv F_2:\iff (F_1\preceq F_2)\land (F_2\preceq F_1).
\end{equation}

Within the quotient space $\F\subset\R^\infty\slash\UF$, we thus consider two
distributions as identical, if only a finite set of moments between them
mismatch. Observe that this \emph{does not} imply the identity of the
distribution functions themselves, unless actually all moments match.

The strict preference relation $\prec$ induces an ordering topology
$\mathcal{T}$ on $\F$, whose open sets are for any two distributions $F_1,
F_2$,
\[
    (F_1,F_2) := \set{F\in\F: F_1\prec F\prec F_2},
\]
and the topology is denoted as $\mathcal{T}=\set{(F_1,F_2)|F_1,F_2\in\F\text{ where
}F_1\prec F_2}$.

\paragraph{The Discrete Case:}
In situations where the game's payoffs are better modeled by discrete random
variables, say if a nominal scale (``low'', ``medium'', ``high'') or a
scoring scheme is used to express revenue, assumption
\ref{asm:finity-of-repair-costs} is too strong in the sense of prescribing a
continuous density where the model density is actually discrete.

Assumption \ref{asm:finity-of-repair-costs} covers discrete distributions
that possess a density w.r.t. the counting measure. The line of arguments as
used in the proof of Lemma \ref{lem:ordering-invariance} remains intact
without change, except for the obvious difference that $\Omega$ is a finite
(and hence discrete) set now. Likewise, all conclusions drawn from lemma
\ref{lem:ordering-invariance}, including theorem
\ref{thm:ordering-invariance}, as well as the definitions of ordering and
topology transfer without change.

\subsection{Comparing Discrete and Continuous Distributions}
The representation \eqref{eqn:sequence-representation} of distributions by
the sequence of their moments works even without assuming the density to be
continuous. Therefore, it elegantly lets us compare distributions of mixed
type, i.e., continuous vs. discrete distributions on a common basis.

It follows that we can -- without any changes to the framework -- compare
discrete to continuous distributions, or any two distributions of the same
type in terms of the $\preceq$-, $\prec$- and $\equiv$-relations. This
comparison is, obviously, only meaningful if the respective random variables
live in the same (metric) space. For example, it would be meaningless to
compare ordinal to numeric data.

\subsection{Comparing Deterministic to Random}
In certain occasions, the consequence of an action may result in perfectly
foreseeable effects, such as fines or similar. Such deterministic outcomes
can be modeled as degenerate distributions (point- or
Dirac-masses)\footnote{Note that the canonic embedding of the reals within
the hyperreals represents a number $a\in\R$ by the constant sequence
$(a,a,\ldots)$. Picking up this idea would be critically flawed in our
setting, as any such constant sequence would be preferred over any
probability distribution (whose moment sequence diverges and thus overshoots
$a$ inevitably and ultimately).}. These are singular and thus violate
assumption \ref{asm:finity-of-repair-costs}, since there is no density
function associated with them, unless one is willing to resort to generalized
functions; which we do not do in this report. Nevertheless, it is possible to
work out the representation in terms of moment sequences. If $X$ is a random
variable that deterministically takes on the constant value $a$ all the time,
then the respective moment sequence has elements $\E{X^k}=\E{a^k}=a^k$ for
all $k\in\N$. Given another non-degenerate distribution with density function
$f$, supported on $\Omega=[0,b]$, we can lower- or upper-bound the moments of
the respective random variable $Y$ by exponential functions in $k$, which can
straightforwardly $\preceq$-, $\equiv$- or $\prec$-compared to the
representative $(a^k)_{k\in\N}$ of the (deterministic) outcome $a\in\R$.
Algorithmic details will follow in part two of this research report.

\subsection{Extensions: Relaxing Assumption
\ref{asm:finity-of-repair-costs}}\label{sec:extensions} Risk management is
often required to handle or avoid extreme (catastrophic) events. The
respective statistical models are distributions with so-called ``heavy'',
``long'' or ``fat'' tails (exact definitions and distribution models will
follow in part two of this report). Extreme-value distributions such as the
Gumbel-distribution, or also the Cauchy-distribution (that is not an extreme
value model) are two natural examples that fall into the class of
distributions that assign unusually high likelihood to large outcomes (that
may be considered as catastrophic consequences of an action). In any case,
our assumption \ref{asm:finity-of-repair-costs} rules out such distributions
by requiring compact support. Even worse, the $\prec$-relation based on the
representation of a distribution by the sequence of its moments cannot be
extended to cover distributions with heavy tails, as those typically do not
have finite moments or moment-generating functions. Nevertheless, such
distributions are important tools in risk management.

Things are, however, not drastically restricted by assumption
\ref{asm:finity-of-repair-costs}, for at least two reasons: First,
compactness of the support is not necessary for all moments to exist, as the
Gaussian distribution has moments of all orders and is supported on the
entire real line (thus violating even two of the three conditions of
assumption \ref{asm:finity-of-repair-costs}). Still, it is characterized
entirely by its first two moments, and thus can easily be compared in terms
of the $\prec$-relation.

Second, and more importantly, any distribution with infinite support can be
approximated by a truncated distribution. Given a random variable $X$ with
distribution function $F$, then the \emph{truncated distribution} is the
distribution of $X$ conditional on $X$ falling into a finite range, i.e., the
truncated distribution function $\hat F$ gives the conditional likelihood
\[
    \hat F(x) = \Pr(X\leq x|a\leq X\leq b).
\]
Provided that $F$ has a density function $f$, the truncated density function
is
\[
    \hat f(x) = \left\{
                  \begin{array}{ll}
                    \frac{f(x)}{F(b)-F(a)}, & a\leq x\leq b; \\
                    0, & \hbox{otherwise.}
                  \end{array}
                \right.
\]
In other words, we simply crop the density $f$ outside the interval $[a,b]$
and re-scale the resulting function to become a probability distribution
again.

Since every distribution function $F$ is non-decreasing and satisfies
$\lim_{x\To\infty}F(x)=1$, any choice of $\delta>0$ admits a value $b$ such
that $F(b)>1-\delta$. Moreover, since our random variables are all
non-negative, we have $\lim_{x\To 0^+}F(x) = \lim_{x\To 0^+}\int_0^x f(x)dx =
0$, since $F$ is right-continuous. It follows that the truncated distribution
density for variables of interest in our setting simplifies to $\hat f(x) =
f(x) / F(b)$. Now, let us compare a distribution $F$ to its truncated version
$\hat F$ in terms of the probabilities that we would compute:
\begin{align*}
    \abs{F(x) - \hat F(x)} &= \abs{\int_0^x f(t)dt - \int_0^x f(t)/F(b)dt}\\
&= \big|\int_0^x f(t)\underbrace{\left(1-\frac 1{F(b)}\right)}_{<\eps}dt\big|
<\eps\int_0^\infty f(t)dt=\eps,
\end{align*}
for sufficiently large $b$, which depends on the chosen $\eps>0$ that
determines the quality of approximation. Conversely, can find always find a
truncated distribution $\hat F$ that approximates $F$ up to an arbitrary
precision $\eps>0$. This shows that restricting ourselves to distributions
with compact support, i.e., adopting assumption
\ref{asm:finity-of-repair-costs}, causes no more than a numerical error that
can be made as small as we wish.

More interestingly, we could attempt to play the same trick as before, and
characterize a distribution with fat, heavy or long tails by a sequence of
approximations to it, arising from better and better precisions $\eps\To 0$.
In that sense, we could hope to compare approximations rather than the true
density in an attempt to extend the preference and equivalence relations
$\preceq$ and $\equiv$ to distributions with fat, heavy or long tails.

Unfortunately, such hope is wrong, as a distribution is not uniquely
characterized by a general sequence of approximations (i.e., we cannot
formulate an equivalent to lemma \ref{lem:uniqueness}), and the outcome of a
comparison of approximations is not invariant to how the approximations are
chosen (i.e., there is also no alike for lemma
\ref{lem:ordering-invariance}). To see the latter, take the quantile function
$F^{-1}(\alpha)$ for a distribution $F$, and consider the tail quantiles
$\overline{F}^{-1}(\alpha) = F^{-1}(1-\alpha)$. Pick any sequence
$(\alpha_n)_{n\To\infty}$ with $\alpha_n\To 0$. Since
$\lim_{x\To\infty}F(x)=1$, the tail quantile sequence behaves like
$\overline{F}^{-1}(\alpha_n)\To \infty$, where the limit is independent of
the particular sequence $(\alpha_n)_{n\To\infty}$, but only the speed of
divergence is different for distinct sequences.

Now, let two distributions $F_1, F_2$ with infinite support be given.  Fix
two sequences $\alpha_n$ and $\omega_n$, both vanishing as $n\To\infty$, and
set
\begin{equation}\label{eqn:support-sequences}
  a_n := \overline{F}_1^{-1}(\alpha_n) \leq  b_n := \overline{F}_2^{-1}(\omega_n).
\end{equation}
Let us approximate $F_1$ by a sequences of truncated distributions $\hat
f_{1,n}$ with supports $[0,a_n]$ and let the sequence $\hat f_{2,n}$
approximate $f_2$ on $[0,b_n]$. Since $a_n\leq b_n$ for all $n$, the proof of
lemma \ref{lem:ordering-invariance} then implies that the approximations with
support $[0,a_n]$ is always strictly preferable to the distribution with
support $[0,b_n]$, thus $\hat f_{1,n}\preceq \hat f_{2,n}$. However, by
replacing the ``$\leq $'' by a ``$\geq$'' in \eqref{eqn:support-sequences},
we can construct approximations to $F_1, F_2$ whose supports exceed one
another in the reverse way, so that the approximations would always satisfy
$\hat f_{1,n}\succeq \hat f_{2,n}$. It follows that the sequence of
approximations \emph{cannot} be used to unambiguously compare distributions
with infinite support, unless we impose some constraints on the tails of the
distributions and the approximations. The next lemma assumes this situation
to simply not occur, which allows to give a \emph{sufficient} condition to
unambiguously extend strict preference in the way we wish.

\begin{lem}\label{lem:approximation-comparison}
Let $F_1,F_2$ be two distributions supported on the entire nonnegative real
half-line $\R^+$ with continuous densities $f_1,f_2$. Let $(a_n)_{n\in\N}$ be
an arbitrary sequence with $a_n\To\infty$ as $n\to\infty$, and let $\hat
f_{i,n}$ for $i=1,2$ be the truncated distribution $f_i$ supported on
$[0,a_n]$.

If there is a constant $c<1$ and a value $x_0\in\R$ such that $f_1(x)<c\cdot
f_2(x)$ for all $x\geq x_0$, then there is a number $N$ such that all
approximations $\hat f_{1,n}, \hat f_{2,n}$ satisfy $\hat f_{1,n}\prec \hat
f_{2,n}$ whenever $n\geq N$.
\end{lem}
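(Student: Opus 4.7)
The plan is to mimic the proof of Lemma \ref{lem:ordering-invariance}: for each fixed $n$ large enough I will analyze the difference of the $k$-th moments of the truncations and show it is positive once $k$ is large. Concretely, denote
\[
g_n(x) := \frac{f_2(x)}{F_2(a_n)} - \frac{f_1(x)}{F_1(a_n)},\qquad \Delta_n(k) := \int_0^{a_n} x^k g_n(x)\,dx,
\]
so that $\Delta_n(k) = \hat m_{2,n}(k) - \hat m_{1,n}(k)$. Showing $\Delta_n(k)>0$ for all sufficiently large $k$ is exactly the condition $\hat f_{1,n}\prec \hat f_{2,n}$ per Definition~\ref{def:preference}. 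I split the integral at $x_0$ into a ``head'' part on $[0,x_0]$ (which is uniformly bounded and only grows like $x_0^{k+1}$) and a ``tail'' part on $[x_0,a_n]$ (which will grow like $b^{k+1}$ for some $b>x_0$).

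First I pick a threshold $N_0$ such that for every $n\geq N_0$ the coefficient
\[
\alpha_n := \frac{1}{F_2(a_n)} - \frac{c}{F_1(a_n)}
\]
is strictly positive; this is possible since $a_n\to\infty$ forces $F_i(a_n)\to 1$ and $c<1$. On $[x_0,a_n]$ the hypothesis $f_1<c\,f_2$ yields $g_n(x)>\alpha_n f_2(x)\geq 0$. Next I need a lower bound on $f_2$ on a fixed subinterval beyond $x_0$. The hypothesis $f_1(x_0)<c\,f_2(x_0)$ together with $f_1\geq 0$ forces $f_2(x_0)>0$, so by continuity there exist $a,b$ with $x_0<a<b$ and a constant $\lambda>0$ with $f_2(x)\geq\lambda$ on $[a,b]$. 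Enlarging $N_0$ to some $N$ so that $a_n\geq b$ for all $n\geq N$, I obtain the tail bound
\[
\int_{x_0}^{a_n} x^k g_n(x)\,dx \;\geq\; \alpha_n\lambda\int_a^b x^k\,dx \;=\; \alpha_n\lambda\,\frac{b^{k+1}-a^{k+1}}{k+1}.
\]

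For the head part I use that $f_1,f_2$ are continuous, hence bounded, on the compact interval $[0,x_0]$; together with $F_i(a_n)\geq \tfrac12$ for $n\geq N$, this gives a constant $M_n<\infty$ with $|g_n(x)|\leq M_n$ on $[0,x_0]$, whence
\[
\Bigl|\int_0^{x_0} x^k g_n(x)\,dx\Bigr| \;\leq\; M_n\,\frac{x_0^{k+1}}{k+1}.
\]
Combining the two estimates,
\[
\Delta_n(k) \;\geq\; \frac{1}{k+1}\bigl(\alpha_n\lambda(b^{k+1}-a^{k+1}) - M_n x_0^{k+1}\bigr),
\]
and since $b>x_0$, the bracket is positive as soon as $k$ is large enough (the $b^{k+1}$ term dominates both $a^{k+1}$ and $x_0^{k+1}$ exponentially). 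This yields a threshold $K_n$ with $\hat m_{1,n}(k)<\hat m_{2,n}(k)$ for all $k\geq K_n$, i.e.\ $\hat f_{1,n}\prec \hat f_{2,n}$, for every $n\geq N$, as required.

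The only real obstacle is step where one must squeeze a useful lower bound out of $f_2$ on an interval with left endpoint strictly greater than $x_0$; everything else is bookkeeping analogous to Lemma~\ref{lem:ordering-invariance}. The key trick is extracting $f_2(x_0)>0$ from the assumption $f_1<c f_2$ combined with $f_1\geq 0$, which is what makes the continuity-based lower bound $\lambda$ available and ultimately what lets the tail growth $b^{k+1}$ beat the head growth $x_0^{k+1}$.
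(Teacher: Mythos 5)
Your proof is correct and takes essentially the same route as the paper: your condition $\alpha_n>0$ is algebraically identical to the paper's requirement that $Q_n=\frac{F_1(a_n)-F_1(0)}{F_2(a_n)-F_2(0)}>c$, and both yield the pointwise domination $\hat f_{1,n}(x)<\hat f_{2,n}(x)$ for $x\geq x_0$ once $n\geq N$. The only difference is that you explicitly carry out the head/tail moment estimate (bounding the $[0,x_0]$ part by $x_0^{k+1}$ and the tail from below by $b^{k+1}$ with $b>x_0$), whereas the paper delegates this final step to the argument of Lemma \ref{lem:ordering-invariance}.
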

\begin{proof}
Throughout the proof, let $i\in\set{1,2}$. The truncated distribution density
that approximates $f_i$ is $f_i(x)/(F_i(a_n)-F_i(0))$, where $[0,a_n]$ is the
common support of $n$-th approximation to $f_1, f_2$. By construction,
$a_{n,i}\To\infty$ as $n\To\infty$, and therefore $F_i(a_n)-F_i(0)\To 1$ for
$i=1,2$. Consequently,
\[
    Q_n = \frac{F_1(a_n)-F_1(0)}{F_2(a_n)-F_2(0)}\To 1,\quad\text{ as }n\To\infty,
\]
and there is an index $N$ such that $Q_n > c$ for all $n\geq N$. In turn,
\[
    f_2(x)\cdot Q_n > f_2(x)\cdot c> f_1(x),
\]
and by rearranging terms,
\begin{equation}\label{eqn:truncated-comparison}
    \frac{f_1(x)}{F_1(a_n)-F_1(0)} < \frac{f_2(x)}{F_2(a_n)-F_2(0)},
\end{equation}
for all $x\geq x_0$ and all $n\geq N$. The last inequality
\eqref{eqn:truncated-comparison} lets us compare the two approximations
easily by the same arguments as have been used in the proof of lemma
\ref{lem:ordering-invariance}, and the claim follows.
\end{proof}

By virtue of lemma \ref{lem:approximation-comparison}, we can extend the
strict preference relation to distributions that satisfy the hypothesis of
the lemma but need not have compact support anymore. Precisely, we would
strictly prefer one distribution over the other, if all truncated
approximations are ultimately preferable over one another.

\begin{defn}[Extended Preference Relation
$\prec$]\label{def:extended-strict-preference} Let $F_1, F_2$ be distribution
functions of nonnegative random variables that have infinite support and
continuous density functions $f_1, f_2$. We \emph{(strictly) prefer} $F_1$
\emph{over} $F_2$, denoted as $F_1\prec F_2$, if for every sequence
$a_n\to\infty$ there is an index $N$ so that the approximations $\hat
F_{i,n}$ for $i=1,2$ satisfy $\hat F_{1,n}\prec \hat F_{2,n}$ whenever $n\geq
N$.

The $\succ$-relation is defined alike, i.e., the ultimate preference of $F_2$
over $F_1$ on any sequence of approximations.
\end{defn}

Definition \ref{def:extended-strict-preference} is motivated by the above
arguments on comparability on common supports, and lemma
\ref{lem:approximation-comparison} provides us with a handy criterion to
decide the extended strict preference relation.


\begin{exa}\label{exa:evd-comparison}
It is a simple matter to verify that any two out of the three kinds of
extreme value distributions (Gumbel, Frechet, Weibull) satisfy the above
condition, thus are strictly preferable over one another, depending on their
particular parametrization. 
\end{exa}

Definition \ref{def:extended-strict-preference} can, however, not applied to
every pair of distributions, as the following example shows.
\begin{exa}\label{exa:alternating-distributions} Take the ``Poisson-like'' distributions with parameter $\lambda>0$,
\[
    f_1(k) \propto \left\{
               \begin{array}{ll}
                 \frac{\lambda^{k/2}}{(k/2)!}e^{-\lambda}, & \hbox{when $k$ is even;} \\
                 0, & \hbox{otherwise.}
               \end{array}
             \right.,\quad
    f_2(k) \propto \left\{
               \begin{array}{ll}
                 0, & \hbox{when $k$ is even};\\
\frac{\lambda^{(k-1)/2}}{((k-1)/2)!}e^{-\lambda}, & \hbox{otherwise} \\
               \end{array}
            \right.
\]
It is a simple matter to verify that no constant $c<1$ can ever make
$f_1<c\cdot f_2$ and that all moments exist. However, neither distribution is
preferable over the other, since finite approximations based on the sequence
$a_n:=n$ will yield alternatingly preferable approximations.
\end{exa}

An occasionally simpler condition that implies the hypothesis of definition
\ref{def:extended-strict-preference} is
\begin{equation}\label{eqn:strict-preference-alternative-criterion}
  \lim_{x\To\infty}\frac{f_1(x)}{f_2(x)} = 0.
\end{equation}
The reason is simple: if the condition of definition
\ref{def:extended-strict-preference} were violated, then there is an infinite
sequence $(x_n)_{n\in\N}$ for which $f_1(x_n)\geq c\cdot f_2(x_n)$ for all
$c<1$. In that case, there is a subsequence $(x_{n_k})_{k\in\N}$ for which
$\lim_{k\To\infty} f_1(x_{n_k})/f_2(x_{n_k})\geq c$. Letting $c\To 1$, we can
construct a further subsequence of $(x_{n_k})_{k\in\N}$ to exhibit that
$\limsup_{n\To\infty}(f_1(x_n)/f_2(x_n))=1$, so that condition
\eqref{eqn:strict-preference-alternative-criterion} would be refuted.

\begin{rem}\label{rem:strict-preference-not-extensible}
It must be emphasized that the above line of arguments does not provide us
with a mean to extend the $\preceq$- or $\equiv$-relations accordingly. For
example, an attempt to define $\preceq$ and $\equiv$ as above is obviously
doomed to failure, as asking for two densities $f_1, f_2$ to satisfy
$f_1(x)\leq c_1\cdot f_2(x)$ ultimately (note the intentional relaxation of
$<$ towards $\leq$), and $f_2(x)\leq c_2\cdot f_1(x)$ ultimately for two
constants $c_1,c_2<1$ is nonsense.
\end{rem}

A straightforward extension of $\preceq$ can be derived from (based on) the
conclusion of lemma \ref{lem:approximation-comparison}:
\begin{defn}\label{def:extended-preference} Let $F_1,F_2$ be two
distributions supported on the entire nonnegative real half-line $\R^+$ with
continuous densities $f_1,f_2$. Let $(a_n)_{n\in\N}$ be a diverging sequence
towards $\infty$, and let $\hat F_{i,n}$ for $i=1,2$ denote the density $F_i$
truncated to have support $[0,a_n]$. We define $F_1\preceq F_2$ if and only
if for every sequence $(a_n)_{n\in\N}$ there is some index $N$ so that $\hat
F_{1,n}\preceq \hat F_{2,n}$ for every $n\geq N$.
\end{defn}

More compactly and informally spoken, definition
\ref{def:extended-preference} demands preference on all approximations with
finite support except for at most finitely many exceptions near the origin.

Obviously, preference among distributions with finite support implies the
extended preference relation to hold in exactly the same way (since the
sequence of approximations will ultimately become constant when $a_n$
overshoots the bound of the support), so definition
\ref{def:extended-preference} extends the $\preceq$-relation in this sense.
This observation justifies our choice of definition
\ref{def:extended-preference} as a valid extension of $\preceq$ from
distributions with compact support to those with infinite support.

Unfortunately, the extended $\preceq$-relation is not as easy to decide as
(extended) strict preference and usually calls for computing the moment
sequences analytically to be able to compare them in the long run.

Nevertheless, example \ref{exa:evd-comparison} substantiates the expectation
that practically relevant distributions over $\R^+$ may indeed compare w.r.t.
$\prec$ or $\succ$ by definition \ref{def:extended-strict-preference} or
criterion \eqref{eqn:strict-preference-alternative-criterion}. While it is
easy to exhibit distributions with infinite support that $\prec$-compare in
the sense of definition \ref{def:extended-preference}, their practical
relevance or even occurrence is not guaranteed. Example
\ref{exa:extended-preference-examples}, however, shows that preference
relations are indeed non-empty if they are defined like
\eqref{eqn:partial-ordering} for distributions with infinite support.

\begin{exa}\label{exa:extended-preference-examples}
Let $X\sim F_1$ be a Gaussian random variable. It is well-known that any
Gaussian distribution has finite moments of all orders, so let us call the
resulting sequence $(a_n)_{n\in\N}$. Furthermore, let us construct another
sequence $(b_n)_{n\in\N}$ being identical to $(a_n)_{n\in\N}$, except for
finitely many indices $I=\set{i_1,i_2,\ldots,i_k}$ for which we choose
$b_j<a_j$ whenever $j\in I$ and $b_j:=a_j$ otherwise. It is easy to see that
the existence of the moment-generating function $\mu_X$ for $F_1$ implies the
existence of a moment-generating function $\mu_Y$ for a random variable
$Y\sim F_2$ that has moment sequence $(b_n)_{n\in\N}$ (since the power-series
$\mu_X$ dominates the series $\mu_Y$). However, it is equally clear that
$\mu_X\neq \mu_Y$. Thus, $F_1\neq F_2$ although $F_1\equiv F_2$, since the
mismatch is only on finitely many indices, and the complement set of these
must be in the ultrafilter.
\end{exa}

\todo{Idee fuer eine Ausdehnung der $\preceq$-Relation: verlange $\prec$ auf
allen endlichen Approximationen. (Pathologisches) Beispiel konstruierbar? Wie
wuerde ein Entscheidungskriterium aussehen?}

\subsection{Interpretation and Implications of Preferences}
Having defined
preferences among probability distributions, we now look at what $F_1\preceq
F_2$ actually means. A simple first impression is gained by considering cases
in which the first few moments match. For that sake, let $F_1, F_2$ be two
distributions for which no preference has been determined so far.
\begin{itemize}
  \item If $\mu_1=\E{F_1}<\mu_2=\E{F_2}$, then we prefer the distribution
      with smaller mean. That is, decisions that yield to less average risk
      would be $\preceq$-preferred.
  \item If the means are equal, then we prefer whatever distribution has
      smaller second moment (by virtue of Steiner's theorem). In other
      words, the preferred among $F_1, F_2$ would be the one with smaller
      variance, or otherwise said, the distribution whose outcome is ``more
      predictable'' in the sense of varying less around its mean.
  \item Upon equal mean and variance, the third moment would make the
      difference. This moment is the \emph{skewness}, and we would prefer
      the distribution for which $\E{X^3}$ is smaller, i.e., the
      distribution that ``leans more to the left''. The respective
      distribution would assign more likelihood to smaller outcomes, thus
      giving less risk.
\end{itemize}
We refrain here from extending the above intuitions to cover cases when
kurtosis tips the scale, as the physical meaning of this quantity is
debatable and no consensus among statisticians exists so far. Instead, we
give the following result that makes the above intuitions more explicit in
the sense of (under certain hypotheses) saying that:
\begin{quote}
\emph{If $F_1\preceq F_2$, then ``extreme events'' are less likely to occur
under $F_1$ than under $F_2$.}
\end{quote}
The rigorous version of this, which especially clarifies the adjective
``extreme'', is the following theorem:
\begin{thm}\label{thm:tail-bounds}
Let $X_1\sim F_1, X_2\sim F_2$, where $F_1, F_2$ satisfy assumption
\ref{asm:finity-of-repair-costs} on the joint support
$\Omega=\supp(F_1)\cup\supp(F_2)=[a,b]$. Furthermore, let $f_1,f_2$ be the
density functions on $\Omega$, and assume $f_1(b)\neq f_2(b)$. If $F_1\preceq
F_2$, then there exists a threshold $x_0\in\Omega$ so that for every $x\geq
x_0$, we have $\Pr(X_1>x)\leq \Pr(X_2>x)$.
\end{thm}
\begin{proof} Let $f_1, f_2$ be the density functions of $F_1, F_2$. Call
$\Omega=\supp(F_1)\cup\supp(F_2)=[0,a]$ the common support of both densities,
and take $\xi=\inf\set{x\in\Omega: f_1(x)=f_2(x)=0}$. Suppose there were an
$\eps>0$ so that $f_1>f_2$ on every interval $[\xi-\delta,\xi]$ whenever
$\delta<\eps$, i.e., $f_1$ would be larger than $f_2$ until both densities
vanish (notice that $f_1=f_2=0$ on the right of $\xi$). Then the proof of
lemma \ref{lem:ordering-invariance} delivers the argument by which we would
find a $K\in\N$ so that $\E{X_1^k}>\E{X_2^k}$ for every $k\geq K$, which
would contradict $F_1\preceq F_2$. Therefore, there must be a neighborhood
$[\xi-\delta,\xi]$ on which $f_1(x)\leq f_2(x)$ for all $x\in
[\xi-\delta,\xi]$. The claim follows immediately by setting $x_0=\xi-\delta$,
since taking $x\geq x_0$, we end up with $\int_x^\xi f_1(t)dt\leq
\int_{x}^\xi f_2(t)dt$, and for $i=1,2$ we have $\int_x^\xi f_i(t)dt =
\int_x^a f_i(t)dt = \Prob{X_i>x}$.
\end{proof}

Observe that this is compatible with the common goals of statistical risk
management \cite{McNeil2005} in other sectors, such as financial business:
the preference-relation $\preceq$ compares the tails of distributions, and
optimization w.r.t. $\preceq$ seeks to ``push'' the mass assigned by a
distribution towards lower damages. Essentially, we thus focus on large
deviations (damages), which intuitively makes sense, as small deviations from
the expected behavior may (most likely) be taken by the system's (designed)
natural resilience against distortions.

We close this section by giving a few examples showing graphically how
different distributions compare against each other.

\begin{exa}[different mean, same variance]
Consider two Gumbel-distributions $X\sim F_1=Gumbel(31.0063, 1.74346)$ and
$Y\sim F_2=Gumbel(32.0063, 1.74346)$, where a density for $Gumbel(a,b)$ is
given by
\[
f(x|a,b) = \frac 1 b e^{\frac{x-a}{b}-e^{\frac{x-a}{b}}},
\]
where $a\in\R$ and $b>0$ are the location and scale parameter.


\begin{figure}[h!]
          \centering
            \includegraphics[scale=0.9]{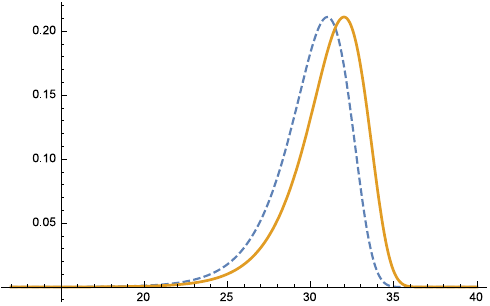}
          \caption{Comparing distributions with different means}\label{fig:different-mean-comparison}
        \end{figure}

Computations reveal that under the given parameters, the means are $\E{X}=30,
\E{Y}=31$ and $\Var{X}=\Var{Y}=5$. Figure \ref{fig:different-mean-comparison}
plots the respective densities of $F_1$ (dashed) and $F_2$ (solid line). The
respective moment sequences evaluate to
\begin{align*}
    \E{X^k}&=(30, 905, 27437.3, 835606, 2.55545\times 10^7,\ldots),\\
    \E{Y^k}&=(31, 966, 30243.3, 950906, 3.00162\times 10^7,\ldots),
\end{align*}
thus showing that $F_1\preceq F_2$. This is consistent with the intuition
that the preferred distribution gives \emph{less expected damage}.
\end{exa}

\begin{exa}[same mean, different variance]
Let us now consider two Gumbel-distributions $X\sim F_1=Gumbel(6.27294,
2.20532)$ and $Y\sim F_2=Gumbel(6.19073, 2.06288)$, for which $\E{X}=\E{Y}=5$
but $\Var{X}=8>\Var{Y}=7$.

\begin{figure}[h!]
  \centering
    \includegraphics[scale=0.9]{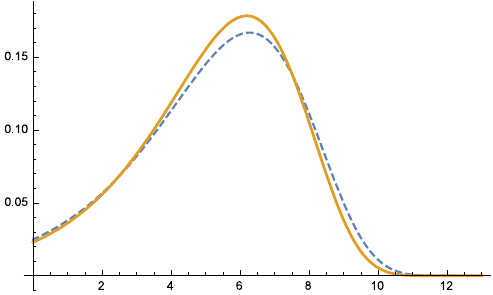}
  \caption{Comparing distributions with equal means but different variance}\label{fig:same-mean-comparison}
\end{figure}

Figure \ref{fig:same-mean-comparison} plots the respective densities of $F_1$
(dashed) and $F_2$ (solid line). The respective moment sequences evaluate to
\begin{align*}
    \E{X^k}&=(5, 33, 219.215, 1654.9, 11957.8,\ldots),\\
    \E{Y^k}&=(5, 32, 208.895, 1517.51, 10806.8,\ldots),
\end{align*}
thus showing that $F_2\preceq F_1$. This is consistent with the intuition
that among two actions leading to the same expected loss, the preferred one
would be one for which the variation around the mean is smaller; thus the
loss prediction is ``more stable''.
\end{exa}

\begin{exa}[different distributions, same mean and variance]
Let us now consider a situation in which the expected loss (first moment) and
variation around the mean (second moment) are equal, but the distributions
are different in terms of their shape. Specifically, let $X\sim
F_1=Gamma(260.345, 0.0373929)$ and $Y\sim Weibull(20,10)$, with densities as
follows:

\[
f_{\text{Gamma}}(x|a,b)=
\left \{
\begin{array}{cc}
 \frac{b^{-a} x^{a-1} e^{-\frac{x}{b}}}{\Gamma (a)}, & x>0; \\
 0, & \text{otherwise} \\
\end{array}
\right.
\]

\[
f_{\text{Weibull}}(x|a,b)   =\left\{
\begin{array}{cc}
 \frac{a e^{-\left(\frac{x}{b}\right)^a} \left(\frac{x}{b}\right)^{a-1}}{b}, & x>0; \\
 0, & \text{otherwise} \\
\end{array}
 \right.
\]

\begin{figure}[h!]
  \centering
    \includegraphics[scale=0.9]{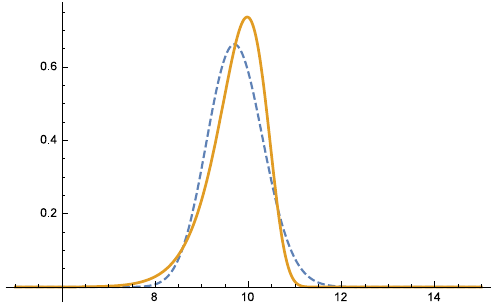}
  \caption{Comparing distributions with matching first two moments but different shapes}\label{fig:different-shape-comparison}
\end{figure}

 Figure \ref{fig:different-shape-comparison} plots the
respective densities of $F_1$ (dashed) and $F_2$ (solid line). The respective
moment sequences evaluate to
\begin{align*}
    \E{X^k}&=(9.73504, 95.1351, 933.259, 9190.01, 90839.7, \ldots),\\
    \E{Y^k}&=(9.73504, 95.1351, 933.041, 9181.69, 90640.2, \ldots),
\end{align*}
thus showing that $F_2\preceq F_1$. In this case, going with the distribution
that visually ```leans more towards lower damages'' would be flawed, since
$F_1$ nonetheless assigns larger likelihood to larger damages. The moment
sequence, on the contrary, unambiguously points out $F_2$ as the preferred
distribution. This illustrates Theorem \ref{thm:tail-bounds}.
\end{exa}

\section{Games with Uncertain Payoffs}\label{sec:generalized-game-theory}
Given a total ordering $\preceq$ on the set of actions as defined in section
\ref{sec:preferences}, we can go on lifting the remaining concepts and
results of game theory to our new setting. In particular, we will have to
investigate zero-sum competitions and Nash-equilibria in games whose payoffs
are probability distributions. Before, however, it pays to look at
arithmetics in our chosen subset of hyperreal numbers that represent our
payoff distributions. It turns out that things cannot be straightforwardly be
carried over, as we will illustrate in the next section.

\subsection{Arithmetic in $\F\subset\R^\infty\slash\UF$}
The space $\R^\infty\slash\UF$ is elsewhere known as the set of
\emph{hyperreal numbers}. Together with the ordering relation defined in the
same way as \eqref{eqn:partial-ordering}, and because $\UF$ is an
ultrafilter, $\F$ is actually a field, and in many ways behaves like the real
numbers. This is due to the ultrafilter acts in much the same way as a
maximal ideal, when the quotient structure is formed. For example, we can
soundly define $\min$- and $\max$-operators on $\F$. Furthermore, we can add
and subtract elements from $\F$ in the canonical way by applying the
respective operation pairwise on the sequences' elements. Likewise, we can
define an absolute-value function $\abs{x} = (\abs{x_i})_{i\in\N}$ on the
sequences, which naturally satisfies the triangle inequality because the
sequence's elements are from $\R$. However, we stress that the absolute value
does not induce a metric on $\F$ (even though it satisfies the necessary
conditions), as the absolute value under this definition is not real-valued.
This is one difference to the field $\R$.

A more important difference is the observation that any probability
distribution satisfying assumption \ref{asm:finity-of-repair-costs} can be
represented by an element in $\F$, but the converse is not true! For
instance, given $F\in\F$ as a representative of some probability
distribution, the element $(-F)$ as being the sequence of moments of $F$,
only with negative signs, does not represent a distribution (in general, and
specifically under assumption \ref{asm:finity-of-repair-costs}). Neither is
the sum of two moments necessarily the moment of some other probability
distribution. Finally, observe that the zero element $0=(0,0,\ldots)$ does
not define a proper probability distribution. Hence, the concept of a
``zero-sum game'' must be replaced by the (strategically equivalent) concept
of a constant-sum game, to properly define things. This issue will not be of
any particular importance in the following.

Our proofs will nevertheless heavily rely on the existence of a well-defined
ordering and arithmetic on the subset $\F$ of the hyperreals. The fact that
in lack of an explicit representation of $\UF$ we cannot do arbitrary
arithmetic somewhat limits the candidate algorithms to analyze the games and
compute equilibria and security strategies, however, this limitation is not
severe and can be overcome in our context of application.

\subsection{Continuity of $F(\vec p,\vec q)(r)$ in $(\vec p,\vec q)$}
The existence of Nash-equilibria crucially hinges on the continuity of payoff
functionals (in the classical setting). The existence of a topology on the
hyperreal set that we consider lets us soundly define continuity in terms of
the topology, but proving our payoff distribution function
\eqref{eqn:game-outcome-distribution} to be continuous is so far an open
issue, and this gap shall be closed now.

To establish continuity of the distribution-valued utility function $u(\vec
p,\vec q):=F(\vec p,\vec q)$ we have to show that any set in the topology
$\mathcal{T}$, i.e., any open set in $\F$, has a preimage under $u$ that is
open in $S_1\times S_2$ w.r.t. the product topology. The following lemma
establishes the important steps towards this conclusion by exploiting the
ordering and arithmetic within $\F$. Hereafter, we consider
$(\F\subset\R^\infty\slash\UF,\mathcal{T},\prec)$ as an ordered topological
space induced by an arbitrary ultrafilter $\UF$.

\begin{lem}\label{lem:continuity-of-scalar-products}
Let $r_1,\ldots,r_k\in\F$ for $k\geq 1$ be a set of fixed elements, and take
$\vec\alpha=(\alpha_1,\ldots,\alpha_k)\in\R^k$. If two elements $\ell,u\in\F$
bound the weighted sum $\ell\prec\sum_{i=1}^k\alpha_ir_i=\vec \alpha^T\vec
r\prec u$, then there is some strictly positive $\delta\in\R$ so that
$\ell\prec\vec{\wt\alpha}^T\vec r\prec u$ for every $\vec{\wt\alpha}$ within
a $\delta$-neighborhood of $\vec\alpha$ in $\R^k$.
\end{lem}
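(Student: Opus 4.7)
The plan is to reduce the continuity claim to the asymptotic divergence argument already used in Lemma \ref{lem:ordering-invariance}, by observing that a small perturbation of $\vec\alpha$ induces only a uniformly small perturbation of the density $f_{\vec\alpha}(x) := \sum_{i=1}^k \alpha_i f_i(x)$, where $f_i$ is the density whose moment sequence represents $r_i$. Everything lives on a common compact support $\Omega = [1,b^*]$ (Assumption \ref{asm:finity-of-repair-costs}), so continuity of each $f_i$ yields a uniform bound $M := \max_i \sup_\Omega f_i < \infty$. The triangle inequality then gives the pointwise estimate $\abs{f_{\vec{\wt\alpha}}(x) - f_{\vec\alpha}(x)} \leq k\delta M$ for all $x\in\Omega$ whenever $\norm{\vec{\wt\alpha}-\vec\alpha}_\infty < \delta$. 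This is the linchpin making the argument go through.

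Next, I would re-run the proof of Lemma \ref{lem:ordering-invariance} on the pair $(\ell, \vec\alpha^T\vec r)$ in order to extract concrete constants: there exist $1\leq a<b\leq b^*$ and $\lambda>0$ such that $f_{\vec\alpha}-f_\ell \geq \lambda$ on $[a,b]$, while $f_{\vec\alpha}-f_\ell$ is bounded below by some $-\lambda_1\leq 0$ on $[1,a)$. I would then choose $\delta_1>0$ with $k\delta_1 M < \lambda/2$, so that the perturbed density difference $f_{\vec{\wt\alpha}}-f_\ell$ is still bounded below by $\lambda/2$ on $[a,b]$ and by $-(\lambda_1+\lambda/2)$ on $[1,a)$. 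The lower-bounding construction from Lemma \ref{lem:ordering-invariance} then yields
\[
 \int_\Omega x^n (f_{\vec{\wt\alpha}} - f_\ell)(x)\,dx \;\geq\; \frac{(\lambda/2)(b^{n+1}-a^{n+1}) - (\lambda_1+\lambda/2)(a^{n+1}-1)}{n+1} \To +\infty,
\]
so by \eqref{eqn:moment} the moment sequence of $\vec{\wt\alpha}^T\vec r$ ultimately dominates that of $\ell$, i.e. $\ell\prec\vec{\wt\alpha}^T\vec r$. A symmetric argument applied to $f_u - f_{\vec\alpha}$ produces a $\delta_2>0$ guaranteeing $\vec{\wt\alpha}^T\vec r \prec u$; setting $\delta := \min(\delta_1,\delta_2)$ finishes the proof.

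The main obstacle is the bookkeeping when transferring the asymptotic divergence argument to the perturbed density: I must ensure that shaving off at most $k\delta M$ from the positive lower bound on $[a,b]$ still leaves enough ``safety margin'' to dominate the worst possible negative contribution from $[1,a)$, which forces the specific choice $k\delta M < \lambda/2$. A secondary concern is that $\vec{\wt\alpha}$ need not be a probability vector even when $\vec\alpha$ is, so $\vec{\wt\alpha}^T\vec r$ is in general only a hyperreal and not itself the moment sequence of a distribution in $\F$. This is harmless because the ordering $\prec$ ultimately only compares the integrals $\int_\Omega x^n(\cdot)\,dx$, and the divergence estimate above is a statement about these integrals that does not require the integrand to be a probability density.
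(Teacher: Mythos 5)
Your reduction to the density level is natural, but the displayed divergence estimate hides a gap that is not mere bookkeeping: it accounts only for the contributions of $[1,a)$ and $[a,b]$ and silently discards $(b,b^*]$. Before perturbation this is tolerable, because the construction in Lemma \ref{lem:ordering-invariance} arranges $f_{\vec\alpha}-f_\ell\geq 0$ to the right of $[a,b]$, so the dropped term is nonnegative. After perturbation, however, you only know $f_{\vec{\wt\alpha}}-f_\ell\geq -k\delta M$ there, and the corresponding term $-k\delta M\int_b^{b^*}x^n\,dx\sim-\frac{k\delta M}{n+1}(b^*)^{n+1}$ outgrows your positive term $\frac{\lambda}{2(n+1)}b^{n+1}$ exponentially whenever $b<b^*$; no real $\delta>0$ repairs this, since $(b/b^*)^{n+1}\To 0$. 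And you cannot in general push $b$ up to $b^*$: the common support $[1,b^*]$ must cover \emph{all} of $r_1,\ldots,r_k$ (including those with $\alpha_i=0$) as well as $\ell$ and $u$, whereas the interval on which $f_{\vec\alpha}-f_\ell$ is bounded away from zero ends where $\supp(F_\ell)\cup\supp(F_{\vec\alpha})$ ends, or earlier if the two densities meet at that endpoint.

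The gap is essential, not cosmetic. Take $k=2$, let $r_1$ be represented by a density on $[1,2]$, let $r_2$ be represented by a density that is bounded away from zero somewhere in $[3,4]$, set $\vec\alpha=(1,0)$, and choose $\ell\prec r_1\prec u$ with $\ell,u$ supported on $[1,2]$. For $\vec{\wt\alpha}=(1,\delta)$ the $n$-th component of $\vec{\wt\alpha}^T\vec r$ is $m_{R_1}(n)+\delta m_{R_2}(n)\geq \delta c\rho^n$ for some $c>0$ and $\rho>2$, while $m_U(n)\leq 2^n$ for $U\sim u$, so $\vec{\wt\alpha}^T\vec r\prec u$ fails for every $\delta>0$; with $\vec{\wt\alpha}=(1,-\delta)$ the lower bound fails instead. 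So no argument that uses only the uniform estimate $\abs{f_{\vec{\wt\alpha}}-f_{\vec\alpha}}\leq k\delta M$ can close the proof; some control of the perturbation near the right end of $\Omega$ relative to the gaps $\vec\alpha^T\vec r-\ell$ and $u-\vec\alpha^T\vec r$ is indispensable. For comparison, the paper argues entirely inside $\F$: it sets $\Delta:=\min\set{\vec\alpha^T\vec r-\ell,\;u-\vec\alpha^T\vec r}$ and $r:=\max_i r_i$, bounds the change of the weighted sum by $k\delta r$, and then picks a real $\delta$ with $k\delta r\prec\Delta$ --- the step corresponding to your missing right-end term is exactly the assertion that such a real $\delta$ exists, which is where the same difficulty is concentrated.
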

\begin{proof}
Define $\Delta :=\min\set{\vec\alpha^T\vec r-\ell, u-\vec\alpha^T\vec r}\succ
0$ and $r :=\max\set{r_1,\ldots,r_k}$. Suppose that we would modify all
weights $\alpha_i$ to $\alpha_i+\delta_i=\wt\alpha_i$. If so, then the
so-modified sum differs from the given one by $\abs{\vec{\wt\alpha}^T\vec
r-\vec\alpha^T\vec r}\leq\sum_{i=1}^k\abs{\delta_i} r_i\leq
r\cdot\sum_{i=1}^k\abs{\delta_i}$. Now, suppose that all
$\abs{\delta_i}\leq\delta$, then the change alters $\vec\alpha^T\vec r$ by a
magnitude of no more than $r\cdot\sum_{i=1}^k\delta_i\leq r\cdot k\cdot
\delta$. As $k$ and $r$ are fixed, we can choose $\delta$ sufficiently small
to satisfy $r\cdot k\cdot\delta\prec\Delta$, in which case we must have
$\abs{\vec{\wt\alpha}^T\vec r-\vec\alpha^T\vec r}<\Delta$, and therefore
$\ell\prec\vec{\wt\alpha}^T\vec r\prec u$ for any choice of $\wt\alpha$
within an $\delta$-neighborhood of $\vec\alpha$ in the maximum-norm on
$\R^k$.
\end{proof}

By virtue of lemma \ref{lem:continuity-of-scalar-products}, continuity of
$F(\vec p,\vec q)$ is easily implied by the continuity of the weights
$C_{\vec p,\vec q}(i,j)$ in $(\vec p,\vec q)$.

\begin{prop}\label{prop:continuity}
Let $i,j$ be integers and define the function $D_{ij}:S_1\times S_2\To\R$ as
$D_{ij}(\vec p,\vec q)=C_{\vec p,\vec q}(i,j)=\Pr_{\vec p,\vec q}(i,j)$. If
$D_{ij}$ is continuous and all $F_{ij}$ satisfy assumption
\ref{asm:finity-of-repair-costs}, then the mapping $F: S_1\times
S_2\To\F;\,\, (\vec p,\vec q)\mapsto\sum_{i,j}C_{\vec p,\vec q}(i,j)F_{ij}$
is continuous w.r.t. the product topology on $S_1\times S_2$ and the order
topology on $\F$.
\end{prop}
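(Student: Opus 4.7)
The plan is to verify continuity by working directly with the topological definition: show that the preimage under $F$ of every basic open set in $\F$ is open in $S_1 \times S_2$ (with the product topology induced by any norm on the finite-dimensional ambient space, e.g.\ the maximum-norm). Since the order topology $\mathcal{T}$ on $\F$ is generated by the basic open intervals $(G_1, G_2) = \set{G \in \F : G_1 \prec G \prec G_2}$, it suffices to fix an arbitrary such interval and a point $(\vec p_0, \vec q_0) \in F^{-1}((G_1,G_2))$ and exhibit a neighborhood of $(\vec p_0,\vec q_0)$ contained in that preimage.

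First I would set $r_{ij} := F_{ij} \in \F$ (each $F_{ij}$ is a legitimate element of $\F$ by assumption \ref{asm:finity-of-repair-costs}) and $\alpha_{ij} := C_{\vec p_0, \vec q_0}(i,j) = D_{ij}(\vec p_0,\vec q_0) \in \R$, so that, by definition of $F$, we have the bracketing $G_1 \prec \sum_{i,j} \alpha_{ij} r_{ij} = F(\vec p_0, \vec q_0) \prec G_2$ in $\F$. Lemma \ref{lem:continuity-of-scalar-products} (applied to the finite family $\set{r_{ij}}$ indexed by pairs, with real weights $\alpha_{ij}$) then supplies a strictly positive $\delta \in \R$ such that every real vector $\vec{\wt\alpha} = (\wt\alpha_{ij})$ within maximum-norm distance $\delta$ of $\vec\alpha$ still satisfies $G_1 \prec \sum_{i,j}\wt\alpha_{ij} r_{ij} \prec G_2$.

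Second, I would invoke the hypothesized continuity of each $D_{ij}: S_1 \times S_2 \to \R$ at $(\vec p_0, \vec q_0)$. For the $\delta$ produced above, continuity gives, for each pair $(i,j)$, an open neighborhood $U_{ij}$ of $(\vec p_0,\vec q_0)$ in $S_1\times S_2$ on which $\abs{D_{ij}(\vec p,\vec q) - D_{ij}(\vec p_0,\vec q_0)} < \delta$. Taking the finite intersection $U = \bigcap_{i,j} U_{ij}$ (finite because $PS_1, PS_2$ are finite), $U$ remains an open neighborhood of $(\vec p_0,\vec q_0)$, and on $U$ the real vector $\vec{\wt\alpha}(\vec p,\vec q) := (D_{ij}(\vec p,\vec q))_{i,j}$ lies within distance $\delta$ of $\vec\alpha$ in the maximum-norm. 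Combining with the conclusion of the previous paragraph, we obtain $G_1 \prec F(\vec p,\vec q) = \sum_{i,j} C_{\vec p,\vec q}(i,j) F_{ij} \prec G_2$ for every $(\vec p,\vec q)\in U$, so $U \subseteq F^{-1}((G_1,G_2))$, proving openness.

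The only genuinely delicate point is the transfer between real $\delta$-neighborhoods (where the $D_{ij}$ live) and the hyperreal ordering on $\F$; this is precisely the content of Lemma \ref{lem:continuity-of-scalar-products}, which lets a \emph{real} perturbation bound $\delta$ do the work of a hyperreal one by absorbing the fixed hyperreal magnitude $r = \max\set{F_{ij}}$ into the choice of $\delta$. Once that bridge is in place, the rest is a routine reduction to continuity of finitely many scalar functions $D_{ij}$ on the compact product simplex $S_1 \times S_2$.
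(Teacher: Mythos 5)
Your argument is correct and follows essentially the same route as the paper's own proof: reduce to preimages of basic order-open intervals, apply Lemma \ref{lem:continuity-of-scalar-products} to the flattened finite sum to get a real perturbation bound $\delta$ on the weights, and then pull that back through the continuity of the $D_{ij}$ to an open neighborhood in $S_1\times S_2$. The only (immaterial) difference is that the paper invokes uniform continuity of $C$ on the compact product simplex, whereas you correctly observe that pointwise continuity at each point of the preimage already suffices for openness.
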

\begin{proof}
Without a metric on $\F$, we need to show that the preimage of every open set
in $\F$ under $F$ is open to prove that $F$ is continuous. For that sake, let
the open set $(\ell,u)\in\mathcal{T}$ be arbitrary and contain some point
$F(\vec p,\vec q)$ (which must exist, for otherwise, the set of preimages
would be empty). To ease notation, let us flatten the double-sum $\sum_{i,j}$
into an ordinary sum (say, by introducing a multiindex $\nu$) over $k=n\cdot
m$ elements, where $n,m$ are the limits in the original expression. Then, the
mapping takes the form $F(\vec p,\vec q)=\sum_{\nu=1}^k D_{\nu}(\vec p,\vec
q)F_\nu$. With the weights $\vec \alpha$ being defined by the individual
values of $D_\nu(\vec p,\vec q)=C_{\vec p,\vec q}(\nu)$, we can apply lemma
\ref{lem:continuity-of-scalar-products} to establish a bound $\delta>0$
within which we can arbitrarily alter the weights towards $\vec{\wt\alpha}$
without leaving the open set $(\ell,u)$. Since $C$ is continuous on compact
$S_1\times S_2$ it is also uniformly continuous, and we can fix a $\delta'>0$
so that $\norm{D_\nu(\vec p',\vec q')-D_\nu(\vec p,\vec q)}<\delta$ whenever
$\norm{(\vec p,\vec q)-(\vec p',\vec q')}<\delta'$, independently of the
particular point $(\vec p,\vec q)$. The sought pre-image of the open set
$(\ell,u)$ is thus the (infinite) union of open neighborhoods constructed in
the way described, and thus itself open.
\end{proof}

\subsection{Security Strategies and Zero-Sum Games}
Given the continuity and ordering of payoffs in games that reward players
with random variables, our next step is the definition of zero-sum games in
this context. We rephrase the standard definition of a zero-sum equilibrium
using the preference relation $\preceq$ in the straightforward fashion, by
defining a two-person game $\Gamma_0=(\set{1,2},\set{S_1,S_2},\set{F,-F})$ as
usual, but keeping the following in mind:
\begin{itemize}
  \item When $(a_i)_{i=1}^\infty$ defines the probability distribution $F$,
      then $-F$ is defined by $(-a_i)_{i=1}^\infty$, but not necessarily
      defines a valid probability distribution any more. To see this,
      simply recall that the Taylor-series \eqref{eqn:taylor-series} would
      upon all negative moments define a negative-valued function
      $\mu_X(s)<0$, which cannot be a moment-generating function, since
      $\mu_X(s)=\E{e^{sX}}\geq 0$ in any case.
  \item The sum $F+(-F)$ being computed in $\R^\infty\slash\UF$ is defined
      as the sequence that is constantly zero. Again, this does not define
      a probability distribution in the proper sense. However, strategic
      equivalence (as in the classical theory of games) tells the set of
      equilibria does not change if the payoffs of both players get a
      constant value added to them (in that case, the payoffs on either
      side are changed by the same value, leaving all inequalities intact).
      By the same token, we may think of constant-sum games, which avoid
      degenerate cases as above (where two distributions add up to
      something that is no longer a distribution).
\end{itemize}

The familiar equilibrium condition in a two-player zero-sum game $\Gamma$ can
be rephrased as follows: a strategy profile $(\vec p^*,\vec q^*)\in S_1\times
S_2$ is a \emph{(Nash-)equilibrium}, if for every $(\vec p,\vec q)\in
S_1\times S_2$,
\begin{equation}\label{eqn:equilibrium}
    F(\vec p,\vec q^*)\preceq F(\vec p^*,\vec q^*)\preceq F(\vec p^*,\vec q),
\end{equation}
i.e., any deviation from the optimal profile $(\vec p^*,\vec q^*)$ would
worsen the situation of either player (in either a zero- or constant-sum
competition).

Before security strategies can be defined properly, we need to assure
existence of equilibria profiles in our modified setting. In this regard,
Glicksberg's theorem, which generalizes Nash's original theorem, helps out:
\begin{thm}[see {\cite{Glicksberg1952} and \cite[Theorem
1.3]{Fudenberg1991}}]\label{thm:glicksberg} Consider a strategic-form game
whose strategy spaces $S_i$ are nonempty compact subsets of a metric space.
If the payoff functions are continuous w.r.t. the metric, then there exists a
Nash-equilibrium in mixed strategies.
\end{thm}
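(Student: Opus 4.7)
The plan is to mimic the proof of Nash's original theorem, but with finite simplices replaced by the (infinite-dimensional) spaces of Borel probability measures over the compact metric strategy spaces, and with Brouwer's or Kakutani's fixed-point theorem replaced by its topological-vector-space extension (the Kakutani--Fan--Glicksberg theorem). Throughout, let $\Delta(S_i)$ denote the set of Borel probability measures on $S_i$, viewed as a subset of the topological dual of $C(S_i)$.

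First I would topologize the mixed-strategy spaces. Since each $S_i$ is a compact metric space, $C(S_i)$ is separable, so the weak-$*$ topology on the unit ball of $C(S_i)^*$ is metrizable; Banach--Alaoglu gives compactness, and the subset $\Delta(S_i)$ is weak-$*$ closed and convex (this is Prokhorov's theorem in disguise). Second, I would extend the continuous payoffs $u_i:S_1\times\cdots\times S_n\To\R$ to expected payoffs on the mixed extension by setting $U_i(\mu_1,\ldots,\mu_n)=\int u_i\,d(\mu_1\otimes\cdots\otimes\mu_n)$, and verify that each $U_i$ is (jointly) continuous in the weak-$*$ topology. This uses that $u_i$ is bounded and continuous on the compact product $\prod_j S_j$, together with the fact that weak-$*$ convergence on each factor implies weak-$*$ convergence of the product measures (a standard Stone--Weierstrass argument on tensor products of continuous functions).

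Third, I would set up the best-response correspondence $BR:\prod_i\Delta(S_i)\rightrightarrows\prod_i\Delta(S_i)$ by
\[
BR(\mu_1,\ldots,\mu_n)=\prod_{i}\arg\max_{\nu_i\in\Delta(S_i)} U_i(\nu_i,\mu_{-i}).
\]
Nonemptiness of the arg-max follows from continuity of $U_i$ in $\nu_i$ and compactness of $\Delta(S_i)$; convexity of the arg-max follows because $U_i$ is \emph{affine} in its $i$-th argument; upper hemicontinuity is Berge's maximum theorem applied to the continuous parameterization $\mu_{-i}\mapsto U_i(\cdot,\mu_{-i})$. Finally I would invoke the Kakutani--Fan--Glicksberg fixed-point theorem on the nonempty, compact, convex subset $\prod_i\Delta(S_i)$ of a locally convex Hausdorff space; any fixed point of $BR$ satisfies \eqref{eqn:von-neumann-equilibrium} in expectation, hence is a Nash equilibrium in mixed strategies.

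The main obstacle is the infinite-dimensional topological bookkeeping: namely establishing that $\Delta(S_i)$ is metrizable-compact-convex in a locally convex Hausdorff space and that the expected payoff is jointly weak-$*$ continuous; once these are in place, the fixed-point step is a direct citation. In our concrete setting (section \ref{sec:generalized-game-theory}) the strategy spaces $S_1,S_2$ are already simplices, so $\Delta(S_i)=S_i$ and the only nontrivial hypothesis to check is continuity of $F(\vec p,\vec q)$, which is exactly what proposition \ref{prop:continuity} delivers.
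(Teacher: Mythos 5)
The paper does not prove this statement at all: it is imported verbatim from the literature (Glicksberg 1952; Fudenberg--Tirole, Theorem 1.3) and only its hypotheses are verified for the game at hand. Your sketch is the standard proof of that cited result --- weak-$*$ compact, convex, metrizable mixed-strategy spaces; joint weak-$*$ continuity of expected payoffs via Stone--Weierstrass; Berge plus Kakutani--Fan--Glicksberg on the best-response correspondence --- and it is correct in all essentials. One caveat worth flagging: the theorem as stated (and as you prove it) concerns real-valued payoffs, whereas the paper applies it to the $\F$-valued payoff $F(\vec p,\vec q)$ with the order topology on $\F$; your closing remark that "the only nontrivial hypothesis to check is continuity" understates this, since the arg-max step and Berge's theorem would need to be re-justified for an $\F$-valued objective --- a gap in the paper's application rather than in your proof of the classical statement.
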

It is a simple matter to verify that
\begin{itemize}
  \item both sets $PS_1,PS_2$ are finite subsets of $\R$ (or $\R^d$ for
      $d>1$) and hence compact w.r.t. all norms on this Euclidian space,
      and
  \item the payoff functions are continuous by lemma \ref{prop:continuity},
\end{itemize}
so that a Nash-equilibrium in the sense of \eqref{eqn:equilibrium} exists by
theorem \ref{thm:glicksberg}.

Furthermore, any saddle-point satisfying \eqref{eqn:equilibrium} defines the
same payoff distribution in the sense of possibly defining a different
representative but in any case pinning down the same equivalence class of
distributions in $\R^\infty\slash\equiv$, where $\equiv$ is defined by
\eqref{eqn:equivalence-relation}. The proof is a restatement of Theorem 3.12
in \cite{Schlee2004}.
\begin{lem}\label{lem:saddle-value-invariance} Let a continuous function $F:PS_1\times PS_2\To\F$ be given, where
$PS_1\subseteq\R^{n_1}, PS_2\subseteq\R^{n_2}$. Furthermore, let $(\vec
p',\vec q')$ and $(\vec p^*,\vec q^*)$ be two different saddle-points . Then,
$(\vec p^*,\vec q')$ and $(\vec p',\vec q^*)$ are also saddle-points, and
\[
    F(\vec p',\vec q')\equiv F(\vec p^*,\vec q^*) \equiv  F(\vec p^*,\vec q') \equiv  F(\vec p',\vec q^*).
\]
\end{lem}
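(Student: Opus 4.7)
The plan is to mimic the classical argument for uniqueness of the value in a two-person zero-sum game, only with $\leq$ replaced by $\preceq$ and equality replaced by the equivalence relation $\equiv$ from \eqref{eqn:equivalence-relation}. Everything that is needed is transitivity of $\preceq$ together with the implication ``$F_1\equiv F_2$ and $F_2\preceq F_3$ imply $F_1\preceq F_3$'', both of which follow directly from definition \ref{def:preference} (shifting the threshold $K$ to the maximum of the two witnesses).

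First I would write out the saddle-point inequalities for the two given equilibria. From \eqref{eqn:equilibrium} applied at $(\vec p^*,\vec q^*)$ with the test profile $(\vec p',\vec q')$ we get
\[
    F(\vec p',\vec q^*)\preceq F(\vec p^*,\vec q^*)\preceq F(\vec p^*,\vec q'),
\]
and applied at $(\vec p',\vec q')$ with the test profile $(\vec p^*,\vec q^*)$ we get
\[
    F(\vec p^*,\vec q')\preceq F(\vec p',\vec q')\preceq F(\vec p',\vec q^*).
\]
Chaining these four inequalities produces the closed cycle
\[
    F(\vec p^*,\vec q^*)\preceq F(\vec p^*,\vec q')\preceq F(\vec p',\vec q')\preceq F(\vec p',\vec q^*)\preceq F(\vec p^*,\vec q^*),
\]
which by transitivity and antisymmetry (on $\equiv$-classes) forces all four distributions to be $\equiv$-equivalent. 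This gives the second assertion of the lemma.

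Next I would verify that $(\vec p^*,\vec q')$ is itself a saddle-point. Its two required inequalities, namely $F(\vec p,\vec q')\preceq F(\vec p^*,\vec q')$ for all $\vec p\in S_1$ and $F(\vec p^*,\vec q')\preceq F(\vec p^*,\vec q)$ for all $\vec q\in S_2$, are obtained by combining the already-established equivalences with the saddle-point inequalities for the two given equilibria: use $(\vec p',\vec q')$ to get $F(\vec p,\vec q')\preceq F(\vec p',\vec q')\equiv F(\vec p^*,\vec q')$, and use $(\vec p^*,\vec q^*)$ to get $F(\vec p^*,\vec q')\equiv F(\vec p^*,\vec q^*)\preceq F(\vec p^*,\vec q)$. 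The symmetric argument handles $(\vec p',\vec q^*)$.

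The only subtlety, and the place I would be most careful, is that the manipulations above use transitivity and the $\equiv/\preceq$-compatibility in $\F\subset\R^\infty/\UF$, rather than honest equality of probability distributions. Theorem \ref{thm:ordering-invariance} guarantees that this is a bona fide total ordering on $\F$ (independent of $\UF$), so the chain of inequalities collapses to a single $\equiv$-class and the algebraic manipulations are fully legitimate. Continuity of $F$ and the metric-topological structure of $PS_1,PS_2$ are not needed for this particular lemma; they were used only to secure existence of at least one equilibrium via theorem \ref{thm:glicksberg}.
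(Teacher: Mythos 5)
Your proposal is correct and follows essentially the same route as the paper's proof: the same closed $\preceq$-cycle obtained from the two saddle-point conditions (merely written starting from a different element), collapsed to a single $\equiv$-class, followed by the same splicing of equivalences into the saddle-point inequalities to verify that the mixed profiles are equilibria. Your explicit remarks on transitivity, the $\equiv/\preceq$-compatibility, and the fact that continuity is not actually needed here are accurate additions but do not change the argument.
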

\begin{proof}
The proof is by direct checking of the saddle-point condition, i.e.,
\begin{align*}
F(\vec p^*,\vec q') &\preceq F(\vec p',\vec q')\preceq F(\vec p',\vec q^*)\preceq F(\vec p^*,\vec q^*)\preceq F(\vec p^*,\vec q')\\
\Rightarrow & F(\vec p',\vec q') \equiv  F(\vec p^*,\vec q^*) \equiv  F(\vec p',\vec q^*) \equiv  F(\vec p^*,\vec q').
\end{align*}
$(p',q^*)$ is saddle-point,
\[
    F(\vec p,\vec q^*)\preceq F(\vec p^*,\vec q^*)\equiv F(\vec p',\vec q^*) \equiv  F(\vec p',\vec q')\preceq F(\vec p',\vec q).
\]
The fact that $(\vec p^*,\vec q')$ is a saddle-point is proved analogously.
\end{proof}

Lemma \ref{lem:saddle-value-invariance} permits calling $v\equiv F(\vec
p^*,\vec q^*)$ \emph{the saddle-point value} of the zero-sum game $\Gamma_0$.
With this, we are ready to step forward towards defining \emph{security
strategies}.

For security strategies in the general case of two-person games with
arbitrary payoffs, let us denote the general game by
$\Gamma=(\set{1,2},\set{S_1,S_2},\set{F,G})$, in which player 1 has payoff
structure $F$, and player 2 has payoff structure $G$. Let $\Gamma_0$ denote
the associated zero-sum competition that -- adopting a worst-case assumption
-- substitutes an unknown $G$ by $(-F)$, i.e.,
$\Gamma_0=(\set{1,2},\set{S_1,S_2},\set{F,-F})$.

\begin{thm}\label{thm:scalar-case} Let $\Gamma$ be an arbitrary two-person game, and let $\Gamma_0$
be its associated zero-sum competition with equilibrium profile $(\vec
p^*,\vec q^*)$. Then, for every $(\vec p,\vec q)\in S_1\times S_2$, we have
\begin{equation}\label{eqn:equilibrium-bound}
    v\preceq F(\vec p,\vec q),
\end{equation}
and the strategy $\vec q^*$ achieves equality in
\eqref{eqn:equilibrium-bound}.
\end{thm}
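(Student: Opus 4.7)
The plan rests on two ingredients that are already in place: the existence of a saddle point for $\Gamma_0$ and the saddle-point inequality \eqref{eqn:equilibrium}. For the first, Glicksberg's Theorem \ref{thm:glicksberg} applies to $\Gamma_0$ because $S_1, S_2$ are compact (as closed bounded subsets of Euclidean simplices) and the payoff $F(\vec p,\vec q)$ is continuous into $(\F,\mathcal{T})$ by Proposition \ref{prop:continuity}. This furnishes an equilibrium profile $(\vec p^*, \vec q^*)$, and Lemma \ref{lem:saddle-value-invariance} legitimizes the notation $v := F(\vec p^*, \vec q^*)$ as a well-defined $\equiv$-class.

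Next I would apply the right half of \eqref{eqn:equilibrium} at this profile to obtain $v = F(\vec p^*, \vec q^*) \preceq F(\vec p^*, \vec q)$ for every $\vec q \in S_2$. Since $\Gamma$ and $\Gamma_0$ share the same payoff $F$ on the side of player~1, this inequality means precisely that by committing to the zero-sum equilibrium strategy $\vec p^*$, player~1 secures a payoff distribution $\succeq$-bounded below by $v$, irrespective of player~2's actual behaviour in $\Gamma$; this is the content of \eqref{eqn:equilibrium-bound}. For the equality claim I would simply substitute $\vec q = \vec q^*$ into the bound to read off $v = F(\vec p^*, \vec q^*)$ directly from the definition of $v$; Lemma \ref{lem:saddle-value-invariance} ensures that the resulting $\equiv$-class is the same for any alternative equilibrium, so $\vec q^*$ genuinely achieves equality in \eqref{eqn:equilibrium-bound} modulo \eqref{eqn:equivalence-relation}.

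The main obstacle, and the part where I would have to be careful, is the quantifier in the displayed inequality \eqref{eqn:equilibrium-bound}: as literally written it ranges over every $(\vec p, \vec q) \in S_1 \times S_2$, whereas the saddle-point argument only delivers $v \preceq F(\vec p^*, \vec q)$ along $\vec p = \vec p^*$. Any attempt to extend the bound to arbitrary $\vec p$ runs into the standard matching-pennies obstruction: if player~1 deviates to some $\vec p \neq \vec p^*$ and player~2 responds optimally against that deviation, one can easily manufacture a configuration with $F(\vec p, \vec q) \prec v$. I would therefore read the $\vec p$ in \eqref{eqn:equilibrium-bound} as tacitly denoting player~1's chosen security strategy $\vec p^*$ inherited from $\Gamma_0$, so that the theorem becomes the usual worst-case guarantee that $\vec p^*$ carries over from the zero-sum surrogate to the original game $\Gamma$. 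Under this reading the entire proof reduces to one application of \eqref{eqn:equilibrium} together with Lemma \ref{lem:saddle-value-invariance}, and the genuine work has already been done in establishing continuity (Proposition \ref{prop:continuity}) and hence existence of $(\vec p^*, \vec q^*)$ via Glicksberg.
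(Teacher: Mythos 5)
Your proof is correct as far as it goes, and its core step is exactly the paper's first step: apply the right half of the saddle-point condition \eqref{eqn:equilibrium} to get $v\equiv F(\vec p^*,\vec q^*)\preceq F(\vec p^*,\vec q)$ for all $\vec q$, with equality at $\vec q=\vec q^*$, and observe that player~1's payoff $F$ is the same in $\Gamma$ and $\Gamma_0$. Your diagnosis of the quantifier is also apt: the literal claim ``$v\preceq F(\vec p,\vec q)$ for \emph{every} $(\vec p,\vec q)$'' is not what the saddle-point inequality delivers, and indeed the paper does not prove it in that generality either. Where you and the paper part ways is in how the arbitrary $\vec p$ is handled. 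You resolve the issue by reading $\vec p$ as tacitly equal to $\vec p^*$, which reduces the theorem to the pure worst-case guarantee. The paper instead takes $(\vec p,\vec q)$ to be an \emph{equilibrium profile of the true game} $\Gamma$ (whose existence it gets from Theorem \ref{thm:glicksberg}) and chains a second inequality $F(\vec p^*,\vec q)\preceq F(\vec p,\vec q)$, justified by the equilibrium property of $(\vec p,\vec q)$ in $\Gamma$ with respect to player~1's own payoff: deviating from $\vec p$ to $\vec p^*$ against the fixed $\vec q$ cannot improve player~1's standing. Combining the two chains gives $v\preceq F(\vec p,\vec q)$ at the actual equilibrium of $\Gamma$, i.e.\ the additional content that the zero-sum value $v$ is a \emph{conservative} bound relative to what player~1 obtains under rational play of the unknown opponent. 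Your reading drops this second assertion entirely, so while nothing you wrote is wrong, you should either restrict the theorem's statement explicitly (as you propose) or add the paper's second inequality \eqref{eqn:security-strategies-part2} to recover its full intended scope; as it stands your argument proves strictly less than the paper's proof does.
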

\begin{proof}
Observe that the payoff $F(\vec p,\vec q)$ is the same for player 1 in both
games $\Gamma$ and $\Gamma_0$. So, if player 1 follows an equilibrium profile
$(\vec p^*,\vec q^*)$ of $\Gamma_0$, then the saddle-point condition
\eqref{eqn:equilibrium} yields
\begin{equation}\label{eqn:security-strategies-part1}
    v\equiv F(\vec p^*,\vec q^*)\preceq F(\vec p^*,\vec q),
\end{equation}
for every $\vec q$, with equality being achieved by $\vec q^*$, obviously.
Since player 2 will play to the best of its own benefit in $\Gamma$, call the
equilibrium profile in $\Gamma$ (which exists by theorem
\ref{thm:glicksberg}) $(\vec p,\vec q)$. In $\Gamma$, however, player 1
deviates by playing $\vec p^*$ thus increasing the payoff for player 2. Thus,
we can continue inequality \eqref{eqn:security-strategies-part1} on the right
side towards
\begin{equation}\label{eqn:security-strategies-part2}
F(\vec p^*,\vec q)\preceq F(\vec p,\vec q).
\end{equation}
The theorem is now immediate from expressions
\eqref{eqn:security-strategies-part1} and
\eqref{eqn:security-strategies-part2}.
\end{proof}

\section{Optimizing Multiple Security Goals}\label{sec:MGSS}
In case of multiple goals to be defended, we turn the two conclusions of
theorem \ref{thm:scalar-case} for scalar-valued games into two axioms on
vector-valued games. This leads to the following definition from
\cite{Rass2012}:
\begin{defn}[Multi-Goal Security Strategy with Assurance]\label{def:MGSS}\hfill\\
A strategy $\vec p^*\in S_1$ in a two-person multi-criteria game with
continuous payoff $\vec u_1:S_1\times S_2\To\F^d$ for the service provider
(player 1), is called a \emph{Multi-Goal Security Strategy with Assurance}
(MGSS) with \emph{assurance} $\vec v=(V_1,\ldots,V_d)\in\F^d$ if two criteria
are met:
\begin{description}
  \item[Axiom 1: Assurance] The values in $\vec v$ are the component-wise
      guaranteed payoff for player 1, i.e. for all components $i$, we have
  \begin{equation}\label{eqn:ds-req0}
    V_i\preceq u_1^{(i)}(\vec p^*,\vec q)\qquad\forall\vec q\in S_2,
  \end{equation}
  with equality being achieved by at least one choice $\vec q_i\in S_2$.
  \item[Axiom 2: Efficiency] At least one assurance becomes void if player
      1 deviates from $\vec p^*$ by playing $\vec p\neq \vec p^*$. In that
      case, some $\vec q_{\vec p}\in S_2$ exists (that depends on $\vec p$)
      such that
  \begin{equation}\label{eqn:ds-req2}
    \vec u_1(\vec p,\vec q_{\vec p})\preceq_1 \vec v.
  \end{equation}
\end{description}

\end{defn}

\subsection{Characterization and Existence of Security Strategies}

The existence of MGSS in the sense of definition \ref{def:MGSS} hinges on a
few basic facts about continuous real-valued functions. Fortunately, it turns
out that the only ingredient needed is uniform continuity of payoffs on
compact strategy spaces. The precise fact used to establish the existence of
multi-criteria security strategies is the following \cite{Rass2012}:
\begin{quote}
Let $u_1:PS_1\times PS_2\To\R^d$ be player 1's payoff function, and let it
be continuous. Since $PS_1\times PS_2$ is compact, given any $\eps>0$, we
can find a $\delta>0$ such that $\norm{\vec u_1(\vec x,\vec y)-\vec
u_1(\vec x',\vec y')}_\infty<\eps$, whenever $\norm{\vec x-\vec
y}_\infty<\delta$.
\end{quote}
This argument can be transferred easily to our setting, by a simple
inspection of the proofs of lemma \ref{lem:continuity-of-scalar-products} and
proposition \ref{prop:continuity}.

Proposition \ref{prop:continuity} tells that $F:S_1\times S_2\To\F$ is
continuous w.r.t. the topologies on $\R^{\abs{PS_1}\cdot\abs{PS_2}}$ and the
ordering topology on $\F$. So, let $(-\eps,+\eps)$ for $0\prec\eps\in\F$ be
an open interval, then we can find some real $\delta>0$ such that whenever
$\norm{(\vec p,\vec q)-(\vec p',\vec q')}_\infty<\delta$, we have $-\eps\prec
F(\vec p,\vec q)-F(\vec p',\vec q')\prec\eps$ by construction of $\delta$
(see the proofs of lemma \ref{lem:continuity-of-scalar-products} and
proposition \ref{prop:continuity}). More importantly, the $\delta$ is
constructed only from $\eps$ but is independent of $(\vec p,\vec q)$. Hence,
$F$ is indeed \emph{uniformly continuous}\footnote{The definition on
topological spaces (without relying on a metric) is the following: a function
$f:X\To Y$ is uniformly continuous, if for any neighborhood $B$ of zero in
$Y$, there is a neighborhood $A$ of zero in $X$ so that $x-y\in A$ implies
$f(x)-f(y)\in B$. This definition is satsified by our ``distribution-valued''
function $F:S_1\times S_2\To\F$.}. For vector-valued payoffs $\vec F:(\vec
p,\vec q)\mapsto (F^{(1)}(\vec p,\vec q),\ldots,F^{(d)}(\vec p,\vec q))$,
uniform continuity is inherited in the canonical way.

Furthermore, we need a proper replacement for the $\infty$-norm on $\R^d$,
which will work on elements $\vec x\in\F^d$. This replacement is
$\hrnorm{\vec x} = \hrnorm{(x_1,\ldots,x_d)} :=
\max\set{\abs{x_1},\ldots,\abs{x_d}}$ for $(x_1,\ldots,x_d)\in\F^d$, which
``resembles'' the $\infty$-norm on the real space. The slight difference in
the notation shall highlight the fact that $\hrnorm{\cdot}$ is technically
\emph{not} a norm, as it maps onto elements of $\F$ rather than real numbers.

Lemma \ref{lem:MGSS-optimality} is proved here for the sake of rigor, but is
the only part from \cite{Rass2012} that requires a reconsideration. The main
result needed here will be theorem \ref{thm:MGSS-characterization}, whose
proof will then rest on our version of lemma \ref{lem:MGSS-optimality}.

\begin{lem}\label{lem:MGSS-optimality}
Let $\Gamma$ be a multi-criteria game, and let $\vec p^*$ be a multi-goal
security provisioning strategy with assurance $\vec v$, assuming that one
exists. Then, no vector $\widetilde{\vec v}\prec\vec v$ is an assurance for
$\vec p^*$.
\end{lem}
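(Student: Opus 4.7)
The plan is to argue by contradiction, exploiting the \emph{tightness} clause in Axiom~1 of Definition~\ref{def:MGSS}. What makes $\vec v$ an assurance is not merely that each $V_i$ lower-bounds the $i$-th coordinate of the payoff under $\vec p^*$, but also that equality $V_i \equiv u_1^{(i)}(\vec p^*,\vec q_i)$ is actually attained for some $\vec q_i\in S_2$. This extra clause is what will be violated when we shrink $\vec v$.

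First I would suppose for contradiction that some $\widetilde{\vec v}=(\widetilde V_1,\ldots,\widetilde V_d)\prec\vec v$ is also an assurance for $\vec p^*$, where $\prec$ on $\F^d$ is understood componentwise. Thus $\widetilde V_i\prec V_i$ for every $i\in\{1,\ldots,d\}$. I would then fix any index $i$ and apply Axiom~1 to the original assurance $\vec v$ to obtain a strategy $\vec q_i\in S_2$ satisfying $u_1^{(i)}(\vec p^*,\vec q_i)\equiv V_i$. Simultaneously, applying the lower-bound part of Axiom~1 to $\vec v$ gives $V_i\preceq u_1^{(i)}(\vec p^*,\vec q)$ for \emph{every} $\vec q\in S_2$.

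Next I would chain these with the hypothesis $\widetilde V_i\prec V_i$: for any $\vec q\in S_2$,
\[
\widetilde V_i \;\prec\; V_i \;\preceq\; u_1^{(i)}(\vec p^*,\vec q),
\]
so $\widetilde V_i\prec u_1^{(i)}(\vec p^*,\vec q)$ \emph{strictly} for every $\vec q$. In particular, no $\vec q\in S_2$ can achieve $u_1^{(i)}(\vec p^*,\vec q)\equiv\widetilde V_i$. This directly contradicts the equality clause of Axiom~1 demanded of $\widetilde{\vec v}$ as an assurance, and the lemma follows.

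The only genuine subtlety — and hence what I expect to be the main obstacle — is making explicit the interpretation of $\prec$ on $\F^d$ and of $\equiv$ on $\F$ that is being used. Because the construction of $\F\subset\R^\infty/\UF$ turns $\equiv$ into equality modulo a tail ultrafilter, the transitivity chain $\widetilde V_i\prec V_i\equiv u_1^{(i)}(\vec p^*,\vec q_i)$ has to be justified by observing that $\prec$ is compatible with $\equiv$ (i.e.\ if $a\prec b$ and $b\equiv c$ then $a\prec c$), which in turn follows from Definition~\ref{def:preference} together with the ultrafilter properties already used in Theorem~\ref{thm:ordering-invariance}. Once this compatibility is recorded, the rest is a one-line contradiction.
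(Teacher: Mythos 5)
Your proof is correct, but it refutes a different clause of Definition~\ref{def:MGSS} than the paper does, and is in fact considerably more elementary. You attack the \emph{tightness} part of Axiom~1: from $\widetilde V_i\prec V_i\preceq u_1^{(i)}(\vec p^*,\vec q)$ for all $\vec q\in S_2$ you conclude $\widetilde V_i\prec u_1^{(i)}(\vec p^*,\vec q)$ strictly everywhere, so the required equality $u_1^{(i)}(\vec p^*,\vec q_i)\equiv\widetilde V_i$ is unattainable. The compatibility of $\prec$ with $\preceq$ and $\equiv$ that you flag is indeed the only thing to check, and it is immediate from Definition~\ref{def:preference}: the defining inequalities on moment sequences hold for all $k$ beyond the maximum of the two thresholds, and $a\prec b$ excludes $a\equiv b$ for the same reason. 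The paper instead targets Axiom~2 (Efficiency): using the uniform continuity of $\vec F$ on the compact strategy space (Lemma~\ref{lem:continuity-of-scalar-products} and Proposition~\ref{prop:continuity}), it picks $\eps:=\min_i\set{v_i-\widetilde v_i}\succ 0$ and shows that any $\vec p'$ in a small enough $\delta$-neighborhood of $\vec p^*$ still satisfies $F^{(i)}(\vec p',\vec q)\succ\widetilde v_i$ for every $i$ and every $\vec q$, so no deviation voids any component of $\widetilde{\vec v}$, contradicting \eqref{eqn:ds-req2}. Your route buys brevity and needs no topology at all; the paper's route buys robustness (it would survive even if the equality clause were dropped from Axiom~1, showing that the Efficiency axiom alone forbids smaller assurance vectors) and exercises the uniform-continuity machinery that is reused in Theorem~\ref{thm:MGSS-characterization}. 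Both are valid refutations of ``$\widetilde{\vec v}$ is an assurance for $\vec p^*$'' as defined.
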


\begin{proof}
Let $\widetilde{\vec v}\prec\vec v$, put $\eps:=\min_{1\leq i\leq
k}\set{v_i-\widetilde{v}_i}$ and observe that $\eps\succ0$. Since $\vec F$ is
uniformly continuous, a $\delta\succ0$ exists for which $\norm{(\vec p,\vec
q)-(\vec p',\vec q')}_\infty\prec\delta$ implies $\hrnorm{\vec F(\vec p,\vec
q)-\vec F(\vec p',\vec q')}\prec\frac \eps 2$.

Consider the mapping $\vec u_{\vec q}:S_1\To\R^k, \vec u_{\vec q}(\vec p) :=
\vec F(\vec p,\vec q)$, which is as well uniformly continuous on $S_1$ by the
same argument. So, $\norm{(\vec p^*,\vec q)-(\vec p',\vec
q)}_\infty=\norm{\vec p^*-\vec p'}_\infty\prec\delta$ implies $\hrnorm{\vec
u_{ \vec q}(\vec p^*)-\vec u_{\vec q}(\vec p')}=\max_{1\leq i\leq
k}\abs{F^{(i)}(\vec p^*,\vec q)-F^{(i)}(\vec p',\vec q)}\prec\frac \eps
2\quad\forall \vec q\in S_2$. It follows that $\abs{F^{(i)}(\vec p^*,\vec
q)-F^{(i)}(\vec p',\vec q)}\prec\frac{\eps}2$ for $i=1,\ldots,k$ and all
$\vec q\in S_2$, and consequently $\max_{\vec q\in S_2}\abs{F^{(i)}(\vec
p^*,\vec q)-F^{(i)}(\vec p',\vec q)}\prec\frac \eps 2$. Now, selecting any
$\vec p'\neq \vec p^*$ within an $\delta$-neigh\-borhood of $\vec p^*$, we
end up asserting $F^{(i)}(\vec p',\vec q)\succeq F^{(i)}(\vec p^*,\vec
q)-\frac \eps 2$ for every $i$ and $\vec q\in S_2$.

Using $F^{(i)}(\vec p^*,\vec q)\succeq v_i$, we can continue by saying that
$F^{(i)}(\vec p',\vec q)\succeq v_i-\frac \eps 2\succ v_i-\eps$. By
definition of $\eps$, we have $v_i-\widetilde{v}_i\succeq\eps$, so that
$F^{(i)}(\vec p',\vec q)\succ \widetilde{v}_i$ for all $i$, contradicting
\eqref{eqn:ds-req2} if $\widetilde{\vec v}$ were to be a valid assurance
vector.
\end{proof}

To compute MGSS, we apply a simple trick: we cast the two-person game in
which player 1 pursuits $d$ goals into a $(d+1)$-person game in which player
1 defends himself against $d$ adversaries, each of which refers to a single
security goal. The scenario is a ``one-against-all'' game, for which
numerical solution techniques (e.g., fictitious play) are known. This is
subject of upcoming companion work.

\begin{defn}[Auxiliary Game]\label{def:auxiliary-game}
Let a multiobjective game
\[
    \Gamma=(\set{1,2},\set{S_1,S_2},\set{\vec F_1,\vec
F_2})
\]
be given, where player 1 receives $d\geq 1$ outcomes through the (known)
payoff $\vec F_1=(F_1^{(1)},\ldots,F_1^{(d)})$. Assume $\vec F_2$ to be
unknown. We define the $(d+1)$-player multiobjective game
$\overline{\Gamma}=(N,S,H)$ as follows:
\begin{itemize}
  \item $N:=\set{0,1,\ldots,d}$, is the player set,
  \item $S := \set{S_1,S_2,\ldots,S_2}$ is the strategy multiset containing
      $d$ copies of $S_2$ (one for each opponent in $N$),
  \item the payoffs are
    \begin{itemize}
      \item vector-valued for player 0, who gets
        \[
            \overline{\vec F}_0(s_0,\ldots,s_d) :=
          (F_1^{(1)}(s_0,s_1),\ldots,F_1^{(d)}(s_0,s_d)),
        \]
      \item scalar for all opponents $i=1,2,\ldots,d$, receiving
          \[
            \overline{F}_i(s_0,\ldots,s_d) := -F_1^{(i)}(s_0,s_i).
          \]
    \end{itemize}
\end{itemize}
The game $\overline{\Gamma}$ is called the \emph{auxiliary game} for
$\Gamma$.
\end{defn}

\begin{thm}\label{thm:MGSS-characterization}
Let $\Gamma$ be a two-player multi-objective game with $d\geq 1$
distribution-valued payoffs. The situation $\vec p^*$ constitutes a network
provisioning strategy with assurance $\vec v$ for player 1 in the game
$\Gamma$, if and only if it is a Pareto-Nash equilibrium strategy for player
0 in the auxiliary $(d+1)$-player game $\overline{\Gamma}$ according to
definition \ref{def:auxiliary-game}.
\end{thm}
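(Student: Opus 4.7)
The plan is to prove the equivalence by translating the two axioms of Definition~\ref{def:MGSS} into the two best-response conditions defining a Pareto-Nash equilibrium in $\overline{\Gamma}$. Throughout I write $\vec v=(V_1,\ldots,V_d)$ for the assurance vector and exploit the fact that, by proposition \ref{prop:continuity} and compactness of $S_2$, the mapping $\vec q\mapsto u_1^{(i)}(\vec p,\vec q)$ is uniformly continuous into the totally ordered space $(\F,\preceq)$, so componentwise minima over $\vec q$ are attained.

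Forward direction ($\Rightarrow$): Given an MGSS $\vec p^*$ with assurance $\vec v$, use Axiom~1 to select for every $i\in\{1,\ldots,d\}$ a response $\vec q_i^*\in S_2$ with $V_i=u_1^{(i)}(\vec p^*,\vec q_i^*)=\min_{\vec q}u_1^{(i)}(\vec p^*,\vec q)$. The candidate equilibrium profile is $(\vec p^*,\vec q_1^*,\ldots,\vec q_d^*)$. Each opponent $i$ in $\overline{\Gamma}$ has scalar payoff $\overline{F}_i=-F_1^{(i)}$, so maximising it at $s_0=\vec p^*$ amounts to minimising $u_1^{(i)}(\vec p^*,\cdot)$, which $\vec q_i^*$ does by construction; hence the opponents are at best responses. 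Player~0 receives $\overline{\vec F}_0(\vec p^*,\vec q_1^*,\ldots,\vec q_d^*)=\vec v$. To check Pareto-optimality under unilateral deviation, suppose for contradiction that some $\vec p'\neq\vec p^*$ satisfies $\overline{\vec F}_0(\vec p',\vec q_1^*,\ldots,\vec q_d^*)\succ_1\vec v$, i.e.\ $F_1^{(i)}(\vec p',\vec q_i^*)\succeq V_i$ for all $i$ with strict inequality somewhere. Invoke Axiom~2 at $\vec p'$: it produces a single $\vec q_{\vec p'}$ with $\vec u_1(\vec p',\vec q_{\vec p'})\preceq_1\vec v$, and in particular some coordinate $i_0$ on which the assurance is void, i.e.\ $u_1^{(i_0)}(\vec p',\vec q_{\vec p'})\prec V_{i_0}$. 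Using the freedom to pick the witness $\vec q_{i_0}^*$ in Axiom~1 equal to $\vec q_{\vec p'}$ when both are minimisers (or otherwise re-selecting the tuple of witnesses for each candidate deviation and patching by the compactness argument in the proof of lemma \ref{lem:MGSS-optimality}) yields $F_1^{(i_0)}(\vec p',\vec q_{i_0}^*)\prec V_{i_0}$, contradicting the assumed Pareto improvement.

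Backward direction ($\Leftarrow$): Given a Pareto-Nash equilibrium profile $(\vec p^*,\vec q_1^*,\ldots,\vec q_d^*)$ of $\overline{\Gamma}$, define $V_i:=F_1^{(i)}(\vec p^*,\vec q_i^*)$ and $\vec v:=(V_1,\ldots,V_d)$. Axiom~1 is immediate: opponent $i$ being at a best response in $\overline{\Gamma}$ means exactly that $\vec q_i^*\in\arg\min_{\vec q}u_1^{(i)}(\vec p^*,\vec q)$, which is $V_i\preceq u_1^{(i)}(\vec p^*,\vec q)$ for every $\vec q$, with equality at $\vec q_i^*$. For Axiom~2, Pareto-optimality of $\vec p^*$ against the fixed profile $(\vec q_1^*,\ldots,\vec q_d^*)$ says that for any $\vec p\neq \vec p^*$ (whose image is not equivalent to $\vec v$), there exists an index $i$ with $F_1^{(i)}(\vec p,\vec q_i^*)\prec V_i$; setting $\vec q_{\vec p}:=\vec q_i^*$ realises a single adversary move voiding the $i$-th assurance, which is what Axiom~2 demands under the natural reading of $\preceq_1$.

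The main obstacle is the structural mismatch between the two formulations: Axiom~2 posits one \emph{common} $\vec q_{\vec p}$ witnessing the inefficiency of any deviation, while $\overline{\Gamma}$ provides $d$ \emph{independent} opponents, each of whom controls only their own coordinate. The delicate step is to show that these two formulations are equivalent -- intuitively, that a common-witness failure on some coordinate $i$ can be delegated to opponent $i$ alone, and conversely that per-component failures can always be realised by a suitable single $\vec q_{\vec p}$. The existence of Pareto-Nash equilibria in $\overline{\Gamma}$ (so the right-hand side is not vacuous) follows from continuity (proposition~\ref{prop:continuity}), compactness of all $S_i$, and an application of Glicksberg's theorem (theorem~\ref{thm:glicksberg}) after componentwise scalarisation, so the equivalence will then automatically deliver existence of an MGSS whenever the hypotheses of section \ref{sec:MGSS} are met.
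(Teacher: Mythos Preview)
Your overall plan—translating Axioms~1 and~2 into the best-response conditions of the auxiliary game—is exactly the route the paper takes; its own proof is a one-line sketch deferring to \cite{Rass2012} and noting that Lemma~\ref{lem:MGSS-optimality} is the sole ingredient needing reproof in the hyperreal setting.

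That said, the place you yourself flag as ``the main obstacle'' is not actually closed in your write-up. In the forward direction your contradiction argument does not go through: Axiom~2 produces a single $\vec q_{\vec p'}$ with $u_1^{(i_0)}(\vec p',\vec q_{\vec p'})\prec V_{i_0}$, but what you must contradict is $F_1^{(i_0)}(\vec p',\vec q_{i_0}^*)\succeq V_{i_0}$ for the \emph{fixed} equilibrium witness $\vec q_{i_0}^*$, which was selected to minimise $u_1^{(i_0)}(\vec p^*,\cdot)$, not $u_1^{(i_0)}(\vec p',\cdot)$. There is no reason $\vec q_{\vec p'}$ should also minimise $u_1^{(i_0)}(\vec p^*,\cdot)$, so the ``freedom to re-select the witness'' you invoke is generally unavailable, and the appeal to ``patching by the compactness argument in the proof of Lemma~\ref{lem:MGSS-optimality}'' does not say what is being patched or how. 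In the backward direction, setting $\vec q_{\vec p}:=\vec q_i^*$ voids coordinate $i$ but says nothing about coordinates $j\neq i$; you only obtain \eqref{eqn:ds-req2} under the weak reading that $\preceq_1$ means merely ``some coordinate is strictly below''. If the intended meaning is componentwise $\preceq$ with at least one strict inequality, this step is incomplete.

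The paper explicitly isolates Lemma~\ref{lem:MGSS-optimality} as the nontrivial step carried over from \cite{Rass2012}, and this is precisely the tool you have not actually deployed: it certifies that the assurance vector $\vec v$ is maximal (so $V_i=\min_{\vec q}u_1^{(i)}(\vec p^*,\vec q)$ is forced, not merely one admissible choice), and that optimality is what bridges the ``single common $\vec q_{\vec p}$'' formulation of Axiom~2 with the ``$d$ independent opponents'' structure of $\overline\Gamma$. You should make that use of the lemma explicit rather than gesturing at its proof technique.
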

\begin{proof}[Sketch] The proof from \cite{Rass2012} transfers with obvious
changes to our setting, except for the above version of Lemma
\ref{lem:MGSS-optimality} being used in the last step.
\end{proof}

Theorem \ref{thm:MGSS-characterization} equates the set of multi-goal
security strategies to the set of Pareto-Nash equilibria in a conventional
game. Existence of such equilibria is assured by the following theorem:

\begin{thm}[{\cite{Lozovanu2005}}]\label{thm:pareto-nash-existence}
Let $\Gamma=(\set{1,\ldots,p},\set{S_1,\ldots,S_p},\set{\vec F_1,\ldots, \vec
F_p})$ be a $p$-player multiobjective game, where $S_1,\ldots,S_p$ are convex
compact sets and $\vec F_1,\ldots,\vec F_p$ represent vector-valued
continuous payoff functions (where payoff for player $i$ is composed from
$r_i\geq 1$ values). Moreover, let us assume that for every $i\in
\set{1,2,\ldots,p}$ each component
$F_i^{(k)}(s_1,s_2,\ldots,s_{i-1},s_i,s_{i+1},\ldots,s_p),
k\in\set{1,2,\ldots,r_i}$, of the vector function $\vec F_i$ represents a
concave function w.r.t. $s_i$ on $S_i$ for fixed $s_1,\ldots,s_{i-1},$
$s_{i+1},\ldots,s_p$. Then the multiobjective game $\Gamma$ has a Pareto-Nash
equilibrium.
\end{thm}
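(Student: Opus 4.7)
The plan is to reduce the multiobjective existence question to the classical scalar one by positive-weight scalarisation, then transfer the Nash property back to a Pareto-Nash property by a strict inequality argument.

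First, for each player $i\in\set{1,\ldots,p}$ I fix an arbitrary weight vector $\vec\lambda_i=(\lambda_i^{(1)},\ldots,\lambda_i^{(r_i)})$ with all $\lambda_i^{(k)}>0$ and $\sum_{k=1}^{r_i}\lambda_i^{(k)}=1$. I then define the associated \emph{scalarised} game $\Gamma_{\vec\lambda}=(\set{1,\ldots,p},\set{S_1,\ldots,S_p},\set{u_1,\ldots,u_p})$ with
\[
u_i(s_1,\ldots,s_p) := \sum_{k=1}^{r_i}\lambda_i^{(k)}F_i^{(k)}(s_1,\ldots,s_p).
\]
Since each $F_i^{(k)}$ is continuous on the product of compact sets and concave in $s_i$ (for every fixed choice of the other coordinates), and the $\lambda_i^{(k)}$ are strictly positive, the scalarised payoff $u_i$ inherits continuity on $\prod_j S_j$ and concavity in $s_i$ on $S_i$.

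Next, I apply the classical Debreu--Fan--Glicksberg theorem (the concave version of Nash's theorem) to $\Gamma_{\vec\lambda}$: with convex compact $S_i$, continuous $u_i$, and concavity in the own strategy, a pure-strategy Nash equilibrium $s^*=(s_1^*,\ldots,s_p^*)\in\prod_i S_i$ exists; equivalently, one applies Kakutani's fixed-point theorem to the best-response correspondence $s\mapsto\prod_i\arg\max_{t_i\in S_i}u_i(t_i,s_{-i})$, whose values are non-empty (by continuity on a compact set), convex (by concavity in $t_i$), and whose graph is closed (by the Maximum Theorem).

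Finally, I claim that every such $s^*$ is already a Pareto-Nash equilibrium of $\Gamma$. Suppose, towards a contradiction, that some player $i$ has a strategy $s_i\in S_i$ that Pareto-dominates $s_i^*$ against $s_{-i}^*$, i.e.\ $F_i^{(k)}(s_i,s_{-i}^*)\geq F_i^{(k)}(s_i^*,s_{-i}^*)$ for all $k$ with strict inequality for at least one $k$. Because every weight $\lambda_i^{(k)}>0$, summing yields
\[
u_i(s_i,s_{-i}^*)=\sum_{k=1}^{r_i}\lambda_i^{(k)}F_i^{(k)}(s_i,s_{-i}^*)>\sum_{k=1}^{r_i}\lambda_i^{(k)}F_i^{(k)}(s_i^*,s_{-i}^*)=u_i(s_i^*,s_{-i}^*),
\]
contradicting the Nash property of $s^*$ in $\Gamma_{\vec\lambda}$. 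Hence no such $s_i$ exists, and $s^*$ is a Pareto-Nash equilibrium of $\Gamma$, as required.

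The delicate step is the Kakutani/Glicksberg invocation, which needs the best-response correspondence to be convex-valued and upper hemi-continuous; both follow from the combined concavity and continuity hypotheses on $\vec F_i$. The scalarisation trick itself is routine once one insists on \emph{strictly} positive weights, which is exactly what turns Pareto-dominance into a strict scalar inequality in the final step.
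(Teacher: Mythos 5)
Your proof is correct. Note that the paper does not prove this theorem itself --- it is imported verbatim from the cited reference \cite{Lozovanu2005} --- and your argument (strictly positive weight scalarisation, existence of a Nash equilibrium of the scalarised concave game via Kakutani/Debreu--Fan--Glicksberg, then the observation that positive weights turn Pareto-dominance into a strict scalar improvement) is essentially the standard proof given in that source, so you have reconstructed the intended argument rather than found a new one.
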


It is \emph{almost} straightforward to apply theorem
\ref{thm:pareto-nash-existence}, since almost all conditions have been
verified already: we have $p=2$ players, whose (vector-valued) payoffs are
continuous by proposition \ref{prop:continuity}, transferred canonically to
the vector-valued case (which means that player 0 in the auxiliary game has
$r_0=d$ payoffs, and every opponent $i=1,\ldots,d$ has $r_i=1$ payoff).
Likewise, the action spaces $PS_1,PS_2$ that we consider are finite subsets
of $\R$, and hence the simplex of discrete probability distributions
$S_1,S_2$ are compact and convex sets. However, it remains generally open
whether or not the payoff functions are concave. Under independent choices of
actions -- cf. section \ref{sec:dependent-actions} -- this is assured, and
theorem \ref{thm:pareto-nash-existence} applies. However, if the actions are
chosen interdependently, i.e., we have a nontrivial copula modeling the
interplay, concavity of the payoffs must be determined upon the explicit
structure of \eqref{eqn:copula-representation}.

\subsection{Relation to Bayesian
Decisions}\label{sec:relation-to-bayesian-choice} To embed the minimax-like
decision finding that we described in a Bayesian framework, recall that a
Bayesian decision is one that is optimal w.r.t. the a-posteriori
loss-distribution that incorporates all information. Informally spoken, such
decisions naturally give rise to minimax-decisions, if the loss-distribution
is the least favourable one. Our minimax approach, on the contrary, has the
opponent player 2 look exactly for this least favourable distribution, and
the zero-sum assumption then implies that a multi-goal security strategy in
the sense of \ref{def:MGSS} can be viewed as a Bayesian decision w.r.t. the
Pareto-Nash opponent-strategy in the ``zero-sum'' auxiliary game associated
with our multi-criteria competition (cf. theorem
\ref{thm:pareto-nash-existence}).

A full fledged treatment of Bayesian decision theory can be found in
\cite{Robert2001}. We abstain from transferring this framework to our
setting, as there appears to be no immediate benefit in doing so, due of the
inherent lack of information that risk management here. In other words, while
Bayesian decisions heavily rely on data, such data is not usually available
in the context of security and defenses against unknown attackers. Attacks
like eavesdropping are intrinsically unobservable (in most practically
relevant cases), and the consequences may be observed delayed and under
fuzzyness.

Summing up, minimax decision as we compute them here may indeed be
pessimistic relative to a ``more informed'' Bayesian decision. Under the
expected lack of information that risk management often suffers from,
however, it is nevertheless the best that we can do (theoretically).


\section{Compiling Quantitative Risk Measures}
The outcome of the game-theoretic analysis is in any case two-fold,
consisting of:
\begin{itemize}
  \item An optimal choice rule $\vec p^*$ over the set of actions $PS_1$,
      and
  \item An lower-bound distribution $V^*$ (or vector $\vec v$ if we
      optimize multiple goals as in section \ref{sec:MGSS}) for the random
      payoff that can be obtained from the game. This payoff is optimal in
      the sense of not being improvable without risking the existence of an
      attack strategy that causes greater damage than predicted by $V^*$.
      This bound is valid if and only if actions are drawn at random from
      $PS_1$ according to the distribution $\vec p^*$.
\end{itemize}

While the optimal action choice distribution $\vec p^*$ is easy to interpret,
compiling crisp risk figures from the payoff value $V^*$ (or a vector $\vec
v$ thereof) requires some more thoughts.

A common approach to risk quantisation is by the well-known ``rule-of-thumb''
\begin{equation}\label{eqn:risk-rule-of-thumb}
    \text{risk} = \text{(incident likelihood)}\times\text{(damage caused by the incident)},
\end{equation}
The beauty of this formula lies in its compatibility with any nominal or
numerical scale of likelihoods and damages, while at the same time, it enjoys
a rigorous mathematical fundament, part of which is game theory.

Indeed, formula \eqref{eqn:risk-rule-of-thumb} is essentially the
\emph{expected value of the loss-distribution} that is specified by the
damage potential of all known incidents, together with their likelihoods. The
distribution $V^*$ that we obtain from our analysis of games with
distribution-valued payoffs is much more general and thus informative: let
$v$ be the optimal distribution, then:
\begin{itemize}
  \item Formula \eqref{eqn:risk-rule-of-thumb} is merely the first moment
      of $V^*$, i.e.,
    \begin{align*}
\text{risk} &= \text{likelihood}\times\text{damage}=\E{R},\text{ when }R\sim V^*
    \end{align*}
where the last quantity is equal to $\E{V^*}=\E{R^1}(\vec p^*,\vec q^*)$
  that can be computed from equation \eqref{eqn:moment}. The missing value
  $\vec q^*$ is here exactly the optimal strategy for the attacker in the
  hypothetical zero-sum competition that is set up to compute the sought
  security strategy $\vec p^*$. In other words, the value $\vec q^*$ is a
  natural by-product of the computation of $\vec p^*$ and delivered
  together with it.
  \item Beyond the crisp result that formula \eqref{eqn:risk-rule-of-thumb}
      delivers, the distribution $V$ can be analyzed for higher moments
      too, such as \emph{variance} of the damage, or quantiles that would
      provide us with probabilistic risk bounds: for example, computing the
      $5\%$- and $95\%$-quantiles of $V$ gives two bounds within the damage
      will range with a $90\%$ likelihood. This may be another interesting
      figure for decision support, which cannot be obtained on the
      classical way via formula \eqref{eqn:risk-rule-of-thumb}.
\end{itemize}
If the results refer to a MGSS, then the above reasoning holds for every
component of the assurance vector $\vec v=(V_1^*,\ldots,V_d^*)$. That is,
risk figures can be computed independently for every aspect of interest.

\begin{rem}
The entries in the optimal attack strategy $\vec q^*$ are an optimal choice
rule over the set of attacker's actions $PS_2$. As such, they can be taken as
indicators to neuralgic spots in the infrastructure. However, it must be
emphasized that equilibria, and hence also security strategies, are
notoriously non-unique. Therefore, the indication by $\vec q^*$ is only one
among many other possible ones, and thus must not be used isolated from or as
a substitute for other/further information and expertise.
\end{rem}

\begin{rem} As an alternative quantity of interest, one may ask for the
expected maximum repair costs over a duration of (unchanging) infrastructure
provisioning $\vec p$ and risk situation $\vec q$. In adopting such an
approach, we can put $M_n :=\max\set{R_1,\ldots,R_n}$, but then find
      \begin{align*}
        \Prob{M_n\leq r}&=\Prob{R_1\leq r,R_2\leq r,\ldots,R_n\leq r}\\
        &=\Prob{R_1\leq r}\Prob{R_2\leq r}\cdots\Prob{R_n\leq r}
        = [(F(\vec p,\vec q))(r)]^n,
        \end{align*}
         if the repairs induce independent costs. Since $(F(\vec p,\vec
      q))(r)\leq 1$ for all $r$ by definition of a distribution function,
      we end up concluding that the long-run maximum is either zero or one,
      as $[(F(\vec p,\vec q))(r)]^n\To 0$ if $(F(\vec p,\vec q))(r)<1$, or
      remains $F(\vec p,\vec q)(r)=1$ otherwise.

      So the maximum is not as informative as we may hope under the
assumptions made. Nevertheless, modeling maxima is indeed the proper way to
control risk, and theorem \ref{thm:tail-bounds} fits our $\preceq$-relation
and framework quite well into these classical line of approaches.
\end{rem}

\section{Outlook}
So far, various practical issues have been left untouched, which will be
covered in companion work to this report. In particular, future discussions
will include:
\begin{itemize}
  \item Methods and models to capture extreme events (distributions
      commonly used in quantitative risk management)
  \item Methods and algorithms to compile payoff distributions from
      simulation or empirical data
  \item Algorithms to efficiently decide preference and equivalence among
      probability distributions
  \item Algorithms to numerically compute security strategies that account
      for the limited arithmetic that we can do in lack of an explicit
      model of the hyperreal structure that represents our distributions.
\end{itemize}

This report is meant to provide the theoretical fundament to build the
practical analysis methods that are described in follow-up work. In that
sequel to this research, issues of modeling extreme events and damage
distributions for a game-theoretic risk control will be discussed.

\section*{Acknowledgment}
This work was supported by the European Commission's Project No. 608090,
HyRiM (Hybrid Risk Management for Utility Networks) under the 7th Framework
Programme (FP7-SEC-2013-1). The author is indepted to Sandra K\"{o}nig from
AIT for valuable discussions and for spotting some errors. The author also
thanks Vincent Bürgin for pointing to some subtleties in the arguments of
Lemma \ref{lem:ordering-invariance}, which led to a refinement of it and its
implied results, and a valuable clarification in Remark
\ref{rem:responding-to-vincent}.


\end{document}